\definecolor{mycol}{rgb}{0,0.5,0}
\numberwithin{equation}{section}
\numberwithin{figure}{section}
\numberwithin{subfigure}{figure}
\newcommand{\beal}{\begin{align}}
\newcommand{\eal}{\end{align}}
\newcommand{\beast}{\begin{align*}}
\newcommand{\east}{\end{align*}}
\newcommand{\beqn}{\begin{equation}}
\newcommand{\eeqn}{\end{equation}}
\newcommand{\bee}{\begin{eqnarray}}
\newcommand{\ee}{\end{eqnarray}}
\newcommand{\best}{\begin{eqnarray*}}
\newcommand{\eest}{\end{eqnarray*}}
\newcommand{\bit}{\begin{itemize}}
\newcommand{\eit}{\end{itemize}}
\newcommand{\bem}{\begin{enumerate}}
\newcommand{\eem}{\end{enumerate}}
\newcommand{\bp}{\begin{pmatrix}}
\newcommand{\ep}{\end{pmatrix}}
\newcommand{\lra}{\leftrightarrow}
\newcommand{\mb}{\mathbb}
\newcommand{\ot}[1]{\tilde{#1}}
\newcommand{\oc}[1]{{#1}^{\vee}}
\newcommand\Pa{Painlev\'e }
\newcommand\lan{\langle}
\newcommand\ran{\rangle}
\newcommand{\al}{\alpha}
\newcommand{\de}{\delta}
\newcommand{\ta}{\theta}
\newcommand{\De}{\Delta}
\newcommand{\be}{\beta}
\newcommand{\ga}{\gamma}
\newcommand{\Ga}{\Gamma}
\newcommand{\e}{\epsilon}
\newcommand{\oll}[1]{{#1}_{11}}
\newcommand{\ohl}[1]{{#1}_{12}}
\newcommand{\ohll}[1]{{#1}_{112}}
\newcommand{\ol}[1]{{#1}_1}
\newcommand{\hs}{\hspace}
\theoremstyle{break}    \newtheorem{Cor}{Corollary}
\tikzstyle{mybox} = [draw=black,fill=white, very thick,
\tikzset{
root/.style={circle,draw=red!70,fill=red!30, minimum size=14pt,inner sep=0.5pt},
leaf/.style={circle,draw=blue!70,fill=blue!30,minimum size=14pt,inner sep=0.5pt},
gr/.style={circle,draw=green!50!blue!,fill=green!60!blue!30!gray,minimum size=14pt,inner sep=0.5pt},
label/.style={text=blue, font=\bfseries}
}
\tikzset{
tedge/.style={
    double distance=1.0mm+\pgflinewidth,
    postaction={draw}, 
  }
  }
\colorlet{notgreen}{gray!70!orange}
\colorlet{notred}{red!60}
 \theoremstyle{marginbreak}
\newtheorem{lemma}[Cor]{Lemma}
\newcommand{\eqn}[1]{(\ref{#1})}
\numberwithin{equation}{section}
\numberwithin{figure}{section}
\numberwithin{prop}{section}
\numberwithin{Rem}{section}
\numberwithin{Cor}{section}
\numberwithin{theorem}{section}
\numberwithin{eg}{section}
\begin{document}

\title{Reflection groups and discrete integrable systems}

\author{Nalini Joshi}\thanks{This research was supported by an Australian Laureate Fellowship \# FL 120100094 and grant \# DP130100967 from the Australian Research Council. }
\address{School of Mathematics and Statistics F07, The University of Sydney, NSW 2006, Australia\\ Tel: +61 2 9351 2172\\ Fax: +61 2 9351 4534}
\email{nalini.joshi@sydney.edu.au}

\author{Nobutaka Nakazono}
\address{School of Mathematics and Statistics F07, The University of Sydney, NSW 2006, Australia\\ Fax: +61 2 9351 4534}
\email{nobua.n1222@gmail.com}

\author{Yang Shi}
\address{School of Mathematics and Statistics F07, The University of Sydney, NSW 2006, Australia\\ Fax: +61 2 9351 4534}
\email{yshi7200@gmail.com}

\maketitle
\date{}    
\begin{abstract}
We present a method of constructing discrete integrable systems with 
crystallographic reflection group
(Weyl) symmetries, thus clarifying the relationship between 
different discrete integrable systems in terms of their symmetry groups.
Discrete integrable systems are associated with space-filling polytopes arise from the geometric representation of the Weyl groups in 
the $n$-dimensional real Euclidean space $\mb{R}^n$. 
The ``multi-dimensional consistency'' property of the discrete integrable system is shown to be inherited from the combinatorial properties of the polytope; while the dynamics of the system is described by the affine translations
of the polytopes on the weight lattices of the Weyl groups.
The connections between some well-known
discrete systems such as the multi-dimensional consistent systems of quad-equations \cite{abs:03} and 
discrete \Pa equations \cite{sak:01} are obtained via the geometric constraints that
relate the polytope of one symmetry group to that of another symmetry group,
a procedure which we call geometric reduction.
\end{abstract}
\tableofcontents
\addtocontents{toc}{\protect\setcounter{tocdepth}{1}}
\section{Introduction}\label{intro}
To find reductions of discrete dynamical systems, regardless of dimension, is an aim which has stimulated a great deal of research. 
We present a constructive geometric method to answer this question based on associating each system with polytopes in $n$-dimensional lattices. 

Focusing in particular on discrete integrable systems, where geometric representations of the Weyl groups are fundamental, the connections between different systems are identified with relations between different Weyl groups, in particular in terms of their reflection subgroups.   In this way, we find geometric reductions from partial difference equations posed on an $n$-dimensional quadrilateral lattice (known as quad-equations)  to Sakai's formulation of the discrete \Pa equations \cite{sak:01}, which are ordinary difference equations.  

While there have been several examples 
associating discrete integrable systems with regular polytopes of the Weyl groups in the literature, they have 
either not been space-filling or have
been restricted to hypercubes. For example, Hirota's d-KP equation is a six-point equation associated with the vertices of an octahedron 
\cite{hirota:81, miwa:82, schief:03, Doliwa:11, akt:15}; the ABS classification of four-point partial difference equations is associated with quadrilaterals
\cite{abs:03}, while six-point equations are associated with the octahedron (octahedron-equations)\cite{abs:12}; and the quadrilateral Yang-Baxter maps have
variables  associated with the edges of a $3$-cube \cite{PTV:06, akt:14}. In contrast, our approach extends to infinite-dimensional affine Weyl groups, which give rise to space-filling polytopes, given by translations of their Voronoi cells.
The vertices of the Voronoi cell are some set of weights of the Weyl group.

This consideration has the advantage that the dynamics of the associated discrete system can then be easily 
understood as a tessellation of Euclidean space by translations of the Voronoi cell on the weight lattice. 
Polytopes associated with the roots of the Weyl groups,
called Delaunay cells, are in general not space-filling polytopes.
Another reason for considering the weights of the Weyl groups is that
the fundamental object in the theory of discrete \Pa equations, 
namely the $\tau$ functions are associated with the weights of the Weyl group \cite{NY:98, ORG:01, KMNOY:06}.

Systems of quad-equations in the literature have been mainly 
defined on the vertices of the $n$-dimensional hypercube ($n$-cube) \cite{abs:03}, or constructed as a consistent system of the same equations \cite{akt:15}. 
In contrast, the systems we consider have a wide variety of symmetries, related via reductions 
to the types of the discrete \Pa equations in Sakai's classification: 19 types of the discrete \Pa equations, 
which follows a degeneration pattern of the affine Weyl symmetry groups \cite{sak:01}: from the root system of type $E_8^{(1)}$
down to $A_1^{(1)}$ (see Figure \ref{N2}). 
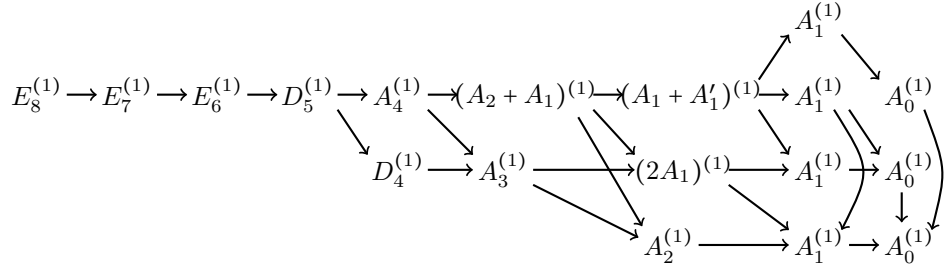
\begin{figure}[t]
\begin{center}
\begin{tikzpicture}[scale = 1]
\begin{scope}
\coordinate (P11s) at (0,0);
\coordinate (P11e) at ($(P11s)+(0.4,0)$);
\coordinate (P12s) at ($(P11e)+(0.8,0)$);
\coordinate (P12e) at ($(P12s)+(0.4,0)$);
\coordinate (P13s) at ($(P12e)+(0.8,0)$);
\coordinate (P13e) at ($(P13s)+(0.4,0)$);
\coordinate (P14s) at ($(P13e)+(0.8,0)$);
\coordinate (P14e) at ($(P14s)+(0.4,0)$);
\coordinate (P15s) at ($(P14e)+(0.8,0)$);
\coordinate (P15e) at ($(P15s)+(0.4,0)$);
\coordinate (P16s) at ($(P15e)+(1.8,0)$);
\coordinate (P16e) at ($(P16s)+(0.4,0)$);
\coordinate (P17s) at ($(P16e)+(1.8,0)$);
\coordinate (P17e) at ($(P17s)+(0.4,0)$);
\coordinate (P18s) at ($(P17e)+(0.8,0)$);
\coordinate (P18e) at ($(P18s)+(0.4,0)$);
\coordinate (P19s) at ($(P18e)+(0.8,0)$);
\coordinate (P19e) at ($(P19s)+(0.4,0)$);
\coordinate (P21s) at (0,-1);
\coordinate (P21e) at ($(P21s)+(0.4,0)$);
\coordinate (P22s) at ($(P21e)+(0.8,0)$);
\coordinate (P22e) at ($(P22s)+(0.4,0)$);
\coordinate (P23s) at ($(P22e)+(0.8,0)$);
\coordinate (P23e) at ($(P23s)+(0.4,0)$);
\coordinate (P24s) at ($(P23e)+(0.8,0)$);
\coordinate (P24e) at ($(P24s)+(0.4,0)$);
\coordinate (P25s) at ($(P24e)+(0.8,0)$);
\coordinate (P25e) at ($(P25s)+(0.4,0)$);
\coordinate (P26s) at ($(P25e)+(1.8,0)$);
\coordinate (P26e) at ($(P26s)+(0.4,0)$);
\coordinate (P27s) at ($(P26e)+(1.8,0)$);
\coordinate (P27e) at ($(P27s)+(0.4,0)$);
\coordinate (P28s) at ($(P27e)+(0.8,0)$);
\coordinate (P28e) at ($(P28s)+(0.4,0)$);
\coordinate (P29s) at ($(P28e)+(0.8,0)$);
\coordinate (P29e) at ($(P29s)+(0.4,0)$);
\coordinate (P31s) at (0,-2);
\coordinate (P31e) at ($(P31s)+(0.4,0)$);
\coordinate (P32s) at ($(P31e)+(0.8,0)$);
\coordinate (P32e) at ($(P32s)+(0.4,0)$);
\coordinate (P33s) at ($(P32e)+(0.8,0)$);
\coordinate (P33e) at ($(P33s)+(0.4,0)$);
\coordinate (P34s) at ($(P33e)+(0.8,0)$);
\coordinate (P34e) at ($(P34s)+(0.4,0)$);
\coordinate (P35s) at ($(P34e)+(0.8,0)$);
\coordinate (P35e) at ($(P35s)+(0.4,0)$);
\coordinate (P36s) at ($(P35e)+(1.0,0)$);
\coordinate (P36e) at ($(P36s)+(1.2,0)$);
\coordinate (P37s) at ($(P36e)+(1.4,0)$);
\coordinate (P37e) at ($(P37s)+(0.8,0)$);
\coordinate (P38s) at ($(P37e)+(0.8,0)$);
\coordinate (P38e) at ($(P38s)+(0.4,0)$);
\coordinate (P39s) at ($(P38e)+(0.8,0)$);
\coordinate (P39e) at ($(P39s)+(0.4,0)$);
\coordinate (P41s) at (0,-3);
\coordinate (P41e) at ($(P41s)+(0.4,0)$);
\coordinate (P42s) at ($(P41e)+(0.8,0)$);
\coordinate (P42e) at ($(P42s)+(0.4,0)$);
\coordinate (P43s) at ($(P42e)+(0.8,0)$);
\coordinate (P43e) at ($(P43s)+(0.4,0)$);
\coordinate (P44s) at ($(P43e)+(0.8,0)$);
\coordinate (P44e) at ($(P44s)+(0.4,0)$);
\coordinate (P45s) at ($(P44e)+(0.8,0)$);
\coordinate (P45e) at ($(P45s)+(0.4,0)$);
\coordinate (P46s) at ($(P45e)+(1.8,0)$);
\coordinate (P46e) at ($(P46s)+(0.4,0)$);
\coordinate (P47s) at ($(P46e)+(1.0,0)$);
\coordinate (P47e) at ($(P47s)+(1.2,0)$);
\coordinate (P48s) at ($(P47e)+(0.8,0)$);
\coordinate (P48e) at ($(P48s)+(0.4,0)$);
\coordinate (P49s) at ($(P48e)+(0.8,0)$);
\coordinate (P49e) at ($(P49s)+(0.4,0)$);
\node at ($(P18s)-(0.4,0)$){$A_1^{(1)}$};
\node at ($(P21s)-(0.4,0)$){$E_8^{(1)}$};
\node at ($(P22s)-(0.4,0)$){$E_7^{(1)}$};
\node at ($(P23s)-(0.4,0)$){$E_6^{(1)}$};
\node at ($(P24s)-(0.4,0)$){$D_5^{(1)}$};
\node at ($(P25s)-(0.4,0)$){$A_4^{(1)}$};
\node at ($(P26s)-(0.9,0)$){$(A_2+A_1)^{(1)}$};
\node at ($(P27s)-(0.9,0)$){$(A_1+A'_1)^{(1)}$};
\node at ($(P28s)-(0.4,0)$){$A_1^{(1)}$};
\node at ($(P29s)-(0.4,0)$){$A_0^{(1)}$};
\node at ($(P35s)-(0.4,0)$){$D_4^{(1)}$};
\node at ($(P36s)-(0.4,0)$){$A_3^{(1)}$};
\node at ($(P37s)-(0.6,0)$){$(2A_1)^{(1)}$};
\node at ($(P38s)-(0.4,0)$){$A_1^{(1)}$};
\node at ($(P39s)-(0.4,0)$){$A_0^{(1)}$};
\node at ($(P47s)-(0.4,0)$){$A_2^{(1)}$};
\node at ($(P48s)-(0.4,0)$){$A_1^{(1)}$};
\node at ($(P49s)-(0.4,0)$){$A_0^{(1)}$};
\draw [->, thick] (P21s)--(P21e);
\draw [->, thick] (P22s)--(P22e);
\draw [->, thick] (P23s)--(P23e);
\draw [->, thick] (P24s)--(P24e);
\draw [->, thick] (P25s)--(P25e);
\draw [->, thick] (P26s)--(P26e);
\draw [->, thick] (P27s)--(P27e);
\draw [->, thick] (P35s)--($(P35e)+(0.2,0)$);
\draw [->, thick] (P36s)--($(P36e)+(0.15,0)$);
\draw [->, thick] (P37s)--(P37e);
\draw [->, thick] (P38s)--(P38e);
\draw [->, thick] (P47s)--(P47e);
\draw [->, thick] (P48s)--(P48e);
\draw [->, thick] ($(P27s)+(0,0.2)$)--($(P17e)-(0,0.2)$);
\draw [->, thick] ($(P18s)-(0.1,0.2)$)--($(P28e)+(0,0.2)$);
\draw [->, thick] ($(P24s)-(0,0.2)$)--($(P34e)+(0,0.2)$);
\draw [->, thick] ($(P25s)-(0,0.2)$)--($(P35e)+(0.2,0.2)$);
\draw [->, thick] ($(P26s)-(0,0.2)$)--($(P36e)+(0.15,0.2)$);
\draw [->, thick] ($(P27s)-(0,0.2)$)--($(P37e)+(0,0.2)$);
\draw [->, thick] ($(P28s)-(0,0.2)$)--($(P38e)+(0,0.2)$);
\draw [->, thick] ($(P36s)-(0,0.2)$)--($(P46e)+(0.2,0.1)$);
\draw [->, thick] ($(P37s)-(0,0.2)$)--($(P47e)+(0,0.2)$);
\draw [->, thick] ($(P38e)+(0.3,-0.3)$)--($(P48e)+(0.3,0.3)$);
\draw [->, thick] ($(P26s)-(0.2,0.3)$)--($(P46e)+(0.25,0.25)$);
\draw[->, thick] ($(P28s)-(0.2,0.2)$) .. controls ($(P38s)+(0.3,0)$) .. ($(P48s)+(-0.1,0.2)$);
\draw[->, thick] ($(P29s)-(0.2,0.2)$) .. controls ($(P39s)+(0.1,0)$) .. ($(P49s)+(-0.1,0.2)$);

\end{scope}
\end{tikzpicture}
\caption{Degeneration of type of symmetries. Of the 22 types listed, 19 (three kinds of discrete equations for the $E_8$ type,
two kinds each for the types $E_7$ and $E_6$, and excluding the
three affine root systems of type $A_0$ on the last column) correspond to the symmetries of
discrete \Pa equations \cite{sak:01}. }\label{N2}
\end{center}
\end{figure}

This means that we have on our hands a corresponding class of quad-equations with immensely  
rich combinatorial/geometrical structures. This paper reviews some of our earlier works in this direction and explain the method in detail with two illustrating
examples.

The plan of the paper is as follows.
In Section \ref{Vor}, we give some preliminaries on the Weyl groups. 
In particular, the actions of the translational elements of the affine Weyl group is discussed in detail.
We define the Voronoi cell of types $A$ and $B$ of the Weyl groups,
which will be central to our construction of the discrete integrable systems in this exposition. In particular, we explain a relation
between the Voronoi cells of types $A$ and $B$ through projection. In Section \ref{DIS}, we construct systems of discrete equations on the Voronoi cells of types $A$ and $B$; and obtain explicit relations between two systems by exploiting the 
geometric relation obtained in Section \ref{Vor}. Two examples of different combinatorial constructions are presented: (i) a system of quad-equations with $W(B_3)$ symmetry and its reduction to 
 a $q$-discrete \Pa type equation with $\widetilde{W}\bigl(A_2^{(1)}\bigr)$ symmetry; (ii) a system of quad-equations 
 with $W(B_2+A_1)$ symmetry and its reduction to 
 a $q$-discrete \Pa type equation with $\widetilde{W}\bigl((A_1+A'_1)^{(1)}\bigr)$ symmetry.
In Section \ref{ag}, 
we construct the rational surface associated with the $(A_2+A_1)^{(1)}$-type root system,
which define the space of initial values of a discrete \Pa equation. 
We discuss the singularity structures of the discrete \Pa equations via intersection theories of 
rational surfaces and the affine Weyl groups on the Picard lattice. 
Finally, in Section \ref{Con} we give some concluding remarks, comment on some of the implications of our result 
and some future directions.

\section{Reflection groups and their associated polytopes}\label{Vor}
We give the necessary properties and facts of the Weyl groups, assuming the
reader is familiar with the theory of the irreducible root systems and their affine
extensions. We follow closely the terminology and notation of Humphreys \cite{Hbook}.
In particular, we discuss the two pictures
associated to the actions of the translational elements of the affine Weyl group via a geometrical representation using a 
$(n+1)$-dimensional real vector space $V^{(1)}$ and its dual space $V^{(1)\ast}$.  The two pictures manifest as 
two complementing aspects of our construction of the discrete integrable systems as will be seen in the later sections
of the paper.
\subsection{Irreducible finite root systems and the corresponding Weyl groups}\label{fW}
Let
$A=(A_{ij})_{1\leq i, j \leq n}$ be a Cartan matrix of type 
$A_n$, $B_n$, $C_n$, $D_n$, $F_4$, $G_2$, $E_6$, $E_7$ or $E_8$ 
\cite[Appendices I-X]{BH}.
Let $V$ and $V^\ast$ be $n$-dimensional real vector spaces spanned by
\beqn
\De=\{\al_1, ..., \al_n\} \quad \mbox{and} \quad \oc \De=\{\oc\al_1, ..., \oc\al_n\},
\eeqn
respectively, and define a bilinear pairing
$\langle {}\,, {} \rangle:$ $V\times V^{\ast} \to \mb R$ by
\beqn\label{Cartan}
\lan\al_i,\oc \al_j\ran=A_{ij}
\eeqn
for all $i, j\in \{1, ..., n\}$.
Since $A$ is non singular, $V^\ast$ is isomorphic to the dual space of $V$.
The elements of $\De$ and $\oc \De$
are called the simple roots and simple coroots; and
\beqn
Q=\mb Z \De \quad \mbox{and} \quad \oc Q=\mb Z \oc \De,
\eeqn
are the root lattice and coroot lattice. The height of the vector $\sum \lambda_i \al_i\in Q$
is defined to be $\sum \lambda_i$.

For each $i\in \{1, ..., n\}$ define a linear transformation $s_{\al_i}: V\to V$ by the requirement that
\beqn\label{siC}
s_{\al_i}\al_j=\al_j-\lan\al_j,\oc \al_i\ran\al_i=\al_j-A_{ji}\al_i.
\eeqn
It is well known that the linear group $W'$ generated by $s_{\al_1}$, ..., $s_{\al_n}$
is finite. Furthermore, $W'$ is isomorphic to the Weyl group of the Cartan matrix $A$, defined to be
the abstract group $W=W(A)$ generated
by $s_1$, ..., $s_n$ subject to the defining relations $(s_is_j)^{m_{ij}}=1$, for all $ i, j\in \{1, ..., n\}$,
where $(m_{ij})_{1\leq i, j \leq n}$
is the Coxeter matrix associated with the Cartan matrix $A$ \cite[no. 1.5, Chap. VI] {BH}. Note that
\beqn\label{aijm}
A_{ij}A_{ji}=4 \mbox{ cos}^2 \left(\frac{\pi}{m_{ij}}\right)
\eeqn
for $i\neq j$
and $A_{ii}=2$. (In particular, $m_{ii}=1$ for all $i$, and $m_{ij}\in\{2, 3, 4, 6\}$ for $i\neq j$).
Let $\eta: W \to W'$ be the isomorphism satisfying $\eta s_i=s_{\al_i}$ for all $i\in \{1, ..., n\}$, and
define an action of $W$ on $V$ via $w.v=(\eta w)v$ for all $w\in W$ and $v\in V$.

Since $W'$ is finite it preserves a Euclidean inner product $(\, ,\,): V\times V\to \mb{R}$. If $0\neq \al \in V$, then
the reflection in the hyperplane orthogonal to $\al$ is the orthogonal transformation $V\to V$ given by
\[
v\mapsto v-\frac{2(\al, v)}{(\al, \al)}\al,
\]
for all $v\in V$. Since 
$
(\al_i, \al_j)=(s_{\al_i} \al_i, s_{\al_i} \al_j)=(-\al_i, \al_j-A_{ji}\al_i),
$
it follows that 
\beqn\label{Aaij}
A_{ji}=2\frac{(\al_i,\al_j)}{(\al_i, \al_i)}
\eeqn
for all $i, j\in \{1, ..., n\}$ and that $s_{\al_i}$ is the reflection 
in the hyperplane $H_{i}$:
\beqn\label{HV}
H_{_i}=\{v\in V\mid \lan v, \oc\al_i\ran=0\}=\{v\in V\mid (v, \al_i)=0\}.
\eeqn
Note that Equation \eqn{Aaij} implies that
\beqn\label{AAaij}
(\al_i,\al_i)A_{ji}=A_{ij}(\al_j, \al_j)
\eeqn
for all $i$ and $j$.
In general, we write $s_\al$ for
the reflection in the hyperplane orthogonal to $\al$. A trivial calculation shows
that if $w: V\to V$ preserves $(\, , \, )$ then 
\beqn
s_{w. \al}=w s_\al w^{-1}
\eeqn
for all nonzero $\al\in V$.

The root system of $W$ is defined to be the subset $\Phi$ of $Q$ given by $\Phi=W'\De$. If $\al\in \Phi$, 
then $\al=w\al_i$ for some $w\in W'$ and $i\in \{1, ..., n\}$,
hence $s_\al=w s_{\al_i} w^{-1}\in W'$. For simplicity of notation, we henceforth use the isomorphism $\eta$
to identify elements of $W'$ with elements of $W$, so that $s_i=s_{\al_i}$ and
$s_\al\in W$ whenever $\al\in \Phi$.

Note that $W$ acts on $V^\ast$ via the contragredient action:
\beqn\label{cong}
\lan w^{-1}. f,  h\ran=\lan f, w.h\ran, \quad f\in V, h\in V^\ast, w\in W.
\eeqn
It follows that
\beqn\label{siCi}
s_{i} .\oc\al_j=\oc \al_j-A_{ij}\oc \al_i,
\eeqn
and the coroot system is defined by
$
\oc \Phi=W.\oc \De.
$

Define a linear isomorphism $\ta: V^\ast \to V $ by
$\ta \oc \al_i=\frac{2 \al_i}{(\al_i, \al_i)}$ for all $i$. Then Equations \eqn{AAaij} and \eqn{siC} combine to give
\begin{align*}
s_i. (\ta \oc \al_j) &=s_i. \left(\frac{2 \al_j}{(\al_j, \al_j)}\right)   \\
 &=\frac{2 \al_j}{(\al_j, \al_j)}-\frac{2 }{(\al_j, \al_j)}A_{ji}\al_i\\
 &=\frac{2 \al_j}{(\al_j, \al_j)}-\frac{2 }{(\al_i, \al_i)}A_{ij}\al_i\\
 &=\ta \oc \al_j- A_{ij}\ta\oc \al_i\\
 &=\ta (\oc \al_j- A_{ij}\oc \al_i)\\
 &=\ta (s_i .\oc \al_j),
\end{align*}
where we have used Equation \eqn{siCi} in the last line. It follows that $\ta$ commutes with the $W$-- actions on $V^\ast$ and $V$.
By the definition of $\ta$ we have $\oc \al_i=\ta^{-1}\left(\frac{2 \al_i}{(\al_i, \al_i)}\right)$, and
we now extend this to the entire root system by define 
\beqn\label{ocal}
\oc \al=\ta^{-1}\left(\frac{2 \al}{(\al, \al)}\right)
\eeqn
for all $\al \in \Phi$. Observe that if $\al=w. \al_j$ for some $w\in W$ and $j\in \{1, 2, ..., n\}$ then
\[
\ta (w.\oc \al_j) =w. \ta(\oc \al_j)  
=2\frac{w. \al_j}{(\al_j, \al_j)}
=2\frac{w. \al_j}{(w.\al_j, w.\al_j)}=\frac{2\al}{(\al, \al)}=\ta (\oc \al),
\]
since $W$ preserves the inner product. Hence $\oc \al=w.\oc \al_j$ whenever $\al=w.\al_j$.
Since we are considering only irreducible root systems, $\Phi$ contains a unique root $\ot\al$,
whose height is maximal, called the {\it highest root}. Define $c_1$, $c_2$, ..., $c_n\in \mb{R}$ by
\beqn\label{hr}
\ot{\al}=\sum_{i=1}^{n}c_{i}\al_{i}.
\eeqn

The values of the $c_i$ for all finite Weyl groups can be found in \cite{Hbook}
for example. Similarly in $\oc\Phi$ we have the highest coroot,
\beqn\label{hcr}
\oc{\ot{\al}_s}=\sum_{i=1}^{n}m_{i}\oc\al_{i},
\eeqn
where $\ot{\al}_s$ is the highest short root in $\Phi$,  $m_i=c_i$ for the simply laced root systems (types $A$, $D$, $E$), and
 $m_i=c_{n-i+1}$ otherwise. Note that for the simply laced root systems $\oc{\ot{\al}_s}=\oc{\ot{\al}}$.

Let $h_1$, ..., $h_n \in V^\ast$ satisfy
\beqn\label{h}
\lan \al_i, h_j\ran=\delta_{ij}, \quad (1\leq i, j\leq n),
\eeqn
so that $\{h_1, ..., h_n\}$ is the basis of $V^\ast$ dual to the basis $\{ \al_1, ..., \al_n\}$
of $V$. The elements $h_1$, ..., $h_n$ are called the fundamental weights, and
$P=\mb Z \{h_1, ..., h_n\}$
is called the weight lattice. 
Observe that from Equations \eqn{Cartan} and \eqn{h} we have
\beqn\label{rw}
\oc \al_i=\sum_{i=1}^{n}A_{ij}h_j.
\eeqn
Moreover, 
\beqn\label{sih}
 s_i.h_j=h_j-\lan\al_i,  h_j\ran\oc \al_i=
 \begin{cases}
 h_j, &\text{for}\quad i\neq j,\\
 h_j-\oc \al_j=h_j-\sum_{k=1}^{n}A_{jk}h_k,&\text{for} \quad i=j.
 \end{cases}
 \eeqn
Since in this paper we will be primarily interested in the Weyl group of types $A$ and $B$, we give their definitions below.

\subsection{Combinatorial description of Weyl group of types $A$ and $B$}\label{Bn}
The Weyl group of type $A_{n-1}$, denoted $W(A_{n-1})$,  is isomorphic to $\mathfrak{S}_{n}$, 
the symmetric group 
on $\{1, ..., n\}$,
with $s_i$ acting as the transposition (i, i+1) for each $i \in \{1, ..., n-1\}$.
The Dynkin diagram $\Ga(A_{n-1})$ (shown in Figure \ref{DAn}) describes the relation
satisfied by $s_1, ..., s_{n-1}$ as explained in \cite[page 31]{Hbook},
namely
\bee\label{geneaW}
&&(s_is_j)^2=1\;\mbox{if}\;\mid i-j\mid >1,\quad\mbox{and}\quad
(s_is_{i+1})^3=1\;\mbox{if}\;\mid i-j\mid =1,
\ee
for $i, j\in\{1, ..., n-1\}$.
\begin{figure}
\begin{tikzpicture}
\node  (a1) {$\circ$};
\node [right=of a1](a2) {$\cdots$} ;
\node [right=of a2](a3) {$\circ$} ;
\node [right=of a3](a4) {$\circ$};
\draw (a1) node [anchor=north] {$s_1$} ;
\draw (a3) node [anchor=north] {$s_{n-2}$} ;
\draw (a4) node [anchor=north] {$s_{n-1}$} ;
\draw[-] (a1) -- node {} (a2);
\draw[-] (a2) -- node {} (a3);
\draw[-] (a3) -- node {} (a4);
\path[use as bounding box] (-1.5,0) rectangle (0,0);
\end{tikzpicture}
\caption{The Dynkin diagram $\Ga(A_{n-1})$} \label{DAn}
\end{figure}
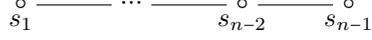

The Weyl group $W(B_{n})=\langle s_1, ..., s_{n-1}, s_n\rangle$ acts on the set of all subsets of $\{1, ..., n\}$,
while the subgroup
$\lan s_1, ..., s_{n-1}\ran$
acting via permutations of  $\{1, ..., n\}$ as in type $A_{n-1}$ above, and with $s_n$ acting via
\beqn
s_n: S\mapsto S\bigominus \{n\}\quad \text{for all $S\subseteq\{1, ..., n\}$},
\eeqn
where $\bigominus$ denotes the symmetric difference, $I\bigominus J=(I\bigcup J)\backslash (I\bigcap J)$.
The Dynkin diagram $\Ga(B_n)$ is shown in Figure \ref{DBn},
and the corresponding relations are:
\bee\label{funWB}
(s_{n-1}s_n)^4=1,\quad
(s_is_j)^2=1\;\mbox{if}\;\mid i-j\mid >1,& (s_is_{i+1})^3=1\;\mbox{if}\;\mid i-j\mid =1,
\ee
for $i, j\in\{1, ..., n-1\}$. Now we are ready to define the Voronoi cell of the Weyl groups.

\begin{figure}
\begin{tikzpicture}
\node  (a1) {$\circ$};
\node [right=of a1](a2) {$\cdots$} ;
\node [right=of a2](a3) {$\circ$} ;
\node [right=of a3](a4) {$\circ$};
\draw (a1) node [anchor=north] {$s_1$} ;
\draw (a3) node [anchor=north] {$s_{n-1}$} ;
\draw (a4) node [anchor=north] {$s_n$} ;
\draw[-] (a1) -- node {} (a2);
\draw[-] (a2) -- node {} (a3);
\draw[double distance=1.5pt] (a3) -- node {} (a4);
\path[use as bounding box] (-1.5,0) rectangle (0,0);
\end{tikzpicture}
\caption{Dynkin diagram $\Ga(B_{n})$} \label{DBn}
\end{figure}

\subsection{Voronoi cell: space-filling Polytopes}
{\Def [\cite{CSbook}]\label{dV}

Let the fundamental simplex $F$ be the convex hull of 
\[\left\{\,{\bf 0},\frac{h_{1}}{m_1}, \frac{h_{2}}{m_2},..., \frac{h_{n}}{m_n}\right\},\]
where $h_i$ are the fundamental weights,
$m_i$ are the coefficients in the expression \eqn{hcr} for the highest coroot.
 The Voronoi cell of the Weyl group $W$, denoted by {\rm Vor}$(W)$ is the union
of the images of the fundamental simplex under $W$,
\beqn
{\rm Vor}(W)=\bigcup_{w\in  W} w .F, \quad w\in W.
\eeqn}

{\theorem [\cite{venkov:54, CSbook}]\label{Vthm}
{\rm Vor}$(W)$ is a space-filling polytope - the whole space ${\mb R}^n$ can be tessellated by translated copies of 
{\rm Vor}$(W)$.
}

The construction of the Voronoi cells for all finite Weyl groups 
 is given in \cite{moody:92} via an approach of the decorated Dynkin diagrams. 
 Here we restrict our attention to types $A$ and $B$.  

{\prop [\cite{moody:92}]\label{actionBn}
 The polytope {\rm Vor}$(B_n)$ is an $n$-cube.}

\begin{proof}
The coefficients $m_i$ for the root system of type $B_n$ are $2, 2, ..., 2, 1$.
The fundamental simplex $F$ is then the convex hull of $\left\{\,0,\frac{h_{1}}{2}, \frac{h_{2}}{2},...,\frac{h_{n-1}}{2}, {h_{n}}\right\}$,
and the polytope {\rm Vor}$(B_n)$ is the convex hull of $Orb(h_n)=W(B_n).h_n$ -- the set of weight vectors in the orbit of the fundamental weight $h_n$ 
under the actions of $W(B_n)$. In the geometric representation of $W(B_n)$
in $\mb{R}^n$, we have
\beqn
h_n=\frac{1}{2}(\e_1+...+\e_n), \eeqn
where $\e_i$ (i=1, ..., n) are
the unit vectors of $\mb{R}^n$.

The actions of the generators of $W(B_n)$ on this geometric representation are that $s_i$ 
permutes $\e_i\leftrightarrow \e_{i+1}$ for $i\in \{1, ..., n-1\}$, and $s_n$ changes the sign of $\e_n$.
Then weights in $Orb(h_n)$ are of the form
\[
\frac{1}{2}
\bp
\pm \e_1+ ... \pm \e_n
\ep.
\]
The $2^n$ weights in this orbit, are exactly the $2^n$ vertices of an $n$-cube with edge length
1, centered at the origin $\bf{0}$.

To simplify the notation, we denote the vectors in $Orb(h_n)$ by $X_S$, $S\subseteq\{1, ..., n\}$. The subset $S$ encodes 
the positive components of the vector. For example, 
$h_n=\frac{1}{2}(\e_1+\e_2+\cdots+\e_n)=X_{\{1, 2, ..., n\}}=X_{12...n}$, and 
$\frac{1}{2}(-\e_1-\e_2-\cdots-\e_n)=X_{\emptyset}=X_0$.  Therefore, the set of vertices of {\rm Vor}$(B_n)$ in this notation is given by
\beqn\label{VorB}
Orb(h_n)=\{X_S\}_{S\subseteq\{1, ..., n\}}.
\eeqn 
\end{proof}

{\prop [\cite{jns2}] \label{bta}  The polytope {\rm Vor}$(A_{n-1})$ is obtained from {\rm Vor}$(B_n)$
by an orthogonal projection $\phi: \mb{R}^n \mapsto \mb{R}^{n-1}$ along the fundamental weight of $W(B_n)$: $h_n=\frac{1}{2}(\e_1+...+\e_n)$,
defined by 
\begin{align}\label{phi}
 \phi \,.v &= v-\frac{(v,h_n)\,h_n}{\| h_n\| ^2}, \\[-2.5em]\nonumber
 \intertext{where $(\, ,\,)$ is the Euclidean inner product and}
 \{U_S\}_{S\subseteq\{1, ..., n\}}&=\{\phi .X_S\},\quad 1\leq |S| \leq n-1,
\end{align}
are the $2^n-2$ vertices of {\rm Vor}$(A_{n-1})$.
}
\begin{proof}
The coefficients $m_i$ for the root system of type $A_{n-1}$ are $\{1, 1, ..., 1, 1\}$.
The fundamental simplex $F$ is then the convex hull of $\left\{\,0,h_1, \, h_2,..., {h_{n-1}}\right\}$,
where
 \beqn\label{hkA}
 h_{k}=(\e_1+\cdots+\e_k)-\frac{k}{n}\sum_{i=1}^{n}\e_{i},\quad 1\leq k\leq n-1,
\eeqn
are the fundamental weights of $W(A_{n-1})$. We denote 
the vectors in $Orb(h_k)$ by $\{U_S\}$ for $S\subseteq\{1, ..., n\}$ and $|S|=k$),
where the subset $S$ in $U_S$ records the positive components of the vector. 
The vertices of {\rm Vor}$(A_{n-1})$,  are the union of the orbits of all the fundamental weights: 
\beqn\label{VorA}
\{U_S\}_{S\subseteq\{1, ..., n\}}=\bigcup_{1\leq k \leq n-1 } r . h_k, \;r\in W(A_{n-1}) .
\eeqn

The projection $\phi$ chooses the $(n-1)$-dimensional hyperplane in $\mb{R}^n$ 
orthogonal to $h_n=\frac{1}{2}(\e_1+...+\e_n)$, which corresponds to the usual 
representation of $A_{n-1}$ in $\mb{R}^n$. It can be checked directly that
$\{\phi .X_S\}=\{U_S\}$. Note that two vertices of {\rm Vor}$(B_{n})$ coincide after the orthogonal projection: 
$\phi .X_{1 2 ... n}=\phi .X_{0}={\bf 0}$, and are now at the center
of {\rm Vor}$(A_{n-1})$.
\end{proof}

{\Rem 
The projection of {\rm Vor}$(B_n)$ to {\rm Vor}$(A_{n-1})$ on the group level corresponds to taking the parabolic subgroup
$W(A_{n-1})$ of $W(B_{n})$. On the Dynkin diagram,  it is associated  with deleting the node $s_n$ from 
$\Ga(B_n)$ (Figure  \ref {DBn}) to obtain $\Ga(A_{n-1})$ (Figure \ref{DAn}).
} 

We give an example of Propositions \ref{actionBn} and \ref{bta} for the case $n=3$.
{\eg \label{B3}
The fundamental weight $h_3$ of $W(B_{3})$ is $\frac{1}{2}(\e_1+\e_2+\e_3)$. The vertices of {\rm Vor}$(B_3)$
are the vectors in $Orb(h_3)$. They are of the form: $\frac{1}{2}(\pm\e_1+\pm\e_2\pm\e_3)$. In the $X_S$ notation they
are
\beqn\label{3cubev}
\{X_{S}\}_{S\subseteq\{1, ..., 3\}} =\{\,X_{0},X_{1},X_{2},X_{3}, X_{12},X_{13},X_{23},X_{123}\}.
\eeqn
They are the $8$ vertices of a $3$-cube.

The generators of $W(B_3)=\langle s_1, s_{2}, s_3\rangle$ act on the variables $X_S$ 
as follows:
\begin{align}\nonumber
&s_1:\{X_1\lra X_2,\quad X_{13}\lra X_{23}\},\\\label{natB3}
&s_2:\{X_2\lra X_3,\quad X_{12}\lra X_{13}\},\\\nonumber
&s_3:\{X_0\lra X_3,\quad X_{1}\lra X_{13},\quad X_{2}\lra X_{23},\quad X_{12}\lra X_{123}\},
\end{align}
satisfying the following relations 
\begin{equation}\label{funWB3}
s_1^2=s_2^2=s_3^2=1,\;\;(s_1s_{2})^3=1, \quad (s_{1}s_3)^2=1,\quad (s_{2}s_3)^4=1,
\end{equation}
which correspond to the Dynkin diagram $\Ga(B_3)$ (see Figure \subref{DB3}).

\begin{figure}
\begin{subfigure}[b]{.4\textwidth}
\centering 
\hs{-1cm}
\begin{tikzpicture}[scale=1]
\node  (a1) {$\circ$};
\node [right=of a1](a2) {$\circ$} ;
\node [right=of a2](a3) {$\circ$};
\draw (a1) node [anchor=north] {$s_1$} ;
\draw (a2) node [anchor=north] {$s_2$} ;
\draw (a3) node [anchor=north] {$s_3$} ;
\draw[-] (a1) -- node {} (a2);
\draw[double] (a2) -- node {} (a3);
\path[use as bounding box] (-1.5,0) rectangle (0,0);
\end{tikzpicture}
\subcaption{$\Ga(B_3)$}
\label{DB3}
\end{subfigure}
\begin{subfigure}[b]{.4\textwidth}
\centering
\hs{-1.5cm}
\begin{tikzpicture}[scale=1]
\node  (a1) {$\circ$};
\node [right=of a1](a2) {} ;
\node [right=of a2](a3) {$\circ$} ;
\draw (a1) node [anchor=north] {$s_1$} ;
\draw (a2) node [anchor=south] {} ;
\draw (a3) node [anchor=north] {$s_{2}$} ;
\draw[-] (a1) -- node {} (a3);
\path[use as bounding box] (-1.5,0) rectangle (0,0);
\end{tikzpicture}
\subcaption{$\Ga(A_2)$}
\label{DA2}
\end{subfigure}
\end{figure}

By Equation \eqn{VorA}, the vertices of {\rm Vor}$(A_2)$ are the weight vectors in the orbits of $h_1$ and $h_2$. 
The fundamental weights of $W(A_2)$ are:
\begin{align}\label{fwA2}
&h_1=U_{1}=\frac{1}{3}(2\e_1-\e_2-\e_3),\\\nonumber
&h_2=U_{12}=\frac{1}{3}(\e_1+\e_2-2\e_3).
\end{align}
The generators of $W(A_2)=\langle s_1, s_{2}\rangle$ act on the variables $U_S$ 
as follows:
\begin{align}\label{natA2}
&s_1:\{U_1\lra U_2,\quad U_{13}\lra U_{23}\},\\\nonumber
&s_2:\{U_2\lra U_3,\quad U_{12}\lra U_{13}\},
\end{align}
satisfying the fundamental relations 
\begin{equation}\label{funWA2}
s_1^2=s_2^2=1,\;\;(s_1s_{2})^3=1, 
\end{equation}
which correspond to the Dynkin diagram $\Ga(A_2)$ (see Figure \subref{DA2}).

{\rm Vor}$(A_2)$ is a hexagon with 6 vertices:
\beqn\label{hexv}
\{U_{S}\}_{S\subseteq\{1, ..., 3\}} =\{\,U_{1}, U_{2}, U_{3}, U_{12}, U_{13}, U_{23}\},
\eeqn
centred at  $U_0=h_0=\bf{0}$. 

It can be checked easily that the set of vertices given in \eqn{hexv} results from the set of vertices given in \eqn{3cubev} by 
orthogonal projection $\phi$ \eqn{phi} along $X_{123}=h_3=\frac{1}{2}(\e_1+\e_2+\e_3)$. Geometrically this corresponds 
to the projection of a $3$-cube to a hexagon.
}

\subsection{Affine Weyl group and translations}\label{twopic}
Given the Cartan matrix $A=(A_{ij})_{1\leq i, j \leq n}$ as in Section \ref{fW} above, we now define

\begin{align}\label{eCno}
 A_{j0}&=\lan \al_j, -\oc{\ot \al}\ran,\quad
 A_{0j}=\lan -\ot \al, \oc \al_j\ran,
\end{align}
and let $A_{00}=2$. The $(n+1)\times (n+1)$ matrix $A^{(1)}=(A_{ij})_{0\leq i, j \leq n}$ is called the extended Cartan matrix.
Let $W\bigl(A^{(1)}\bigr)$ be the abstract group generated by $s_0, s_1, ..., s_{n}$, subject to the defining relations 
$(s_is_j)^{m_{ij}}=1$ for all $i, j \in \{0, 1, ...n\}$, where the $m_{ij}$ are given by Equation \eqn{aijm}. These $m_{ij}$ can
be read off from the {\it extended Dynkin diagrams} $\Ga^{(1)}$ see \cite[p.34]{Hbook}.
The group $W\bigl(A^{(1)}\bigr)$ is called the affine Weyl group of type ~$A$.  

The corresponding extended vector space $V^{(1)}$ is spanned by the set of simple affine roots, $\De^{(1)}=\De\cup\{\al_0\}$.
In addition, we write $s_0$ for the reflection on $V^{(1)}$ corresponding to $\al_0$. Thus 
\beqn\label{sia}
 s_i.\al_j=
\al_j-A_{ji} \al_i \quad \text{for all} \;i, j \in \{0, 1, ...n\}.
 \eeqn
 We define
$
 \Phi^{(1)}=W(A^{(1)}).\De^{(1)},
$
and call $\Phi^{(1)}$ the {\it the affine root system}. 
It can be shown that
\begin{align}
 \Phi^{(1)}&=\{\,\al+k\de \mid  \al\in \Phi, k\in \mb{Z}\},
\end{align}
where 
\beqn\label{de}
 \de=\al_0+\ot \al=\sum_{i=0}^{n}c_{i}\al_{i}
\eeqn
is called ${\it null \;root}$. We have let $c_0=1$, and $c_i$ ($i\in \{1, ..., n\}$) are defined earlier.
From the fact that $A^{(1)}$ is singular we have
\beqn\label{detr}
\lan \de, \oc \al_j\ran=\lan \sum_{i=0}^{n}c_{i}\al_{i}, \oc \al_j\ran= \sum_{i=0}^{n}c_{i}A_{ij}=0, \quad \mbox{for} \; 0\leq j\leq n,
\eeqn
and that all elements of $W$ act trivially on $\de$.
The extended space $V^{(1)}$ is spanned by the simple affine roots $\{  \al_0, \al_1, ...,  \al_n\}$, or
equally well, by the basis $\{ \al_1, ..., \al_n,  \de\}$.
We extend the bilinear pairing in Equation \eqn{h} by
$\langle {}\,, {} \rangle$: $V^{(1)}\times V^{(1)\ast} \to \mb R$ by
\beqn
\lan \al_i, h_j\ran=\delta_{ij}, \quad\mbox{for}\quad (1\leq i, j\leq n),\quad\mbox{and}\quad\lan \de,h_\de \ran=1.
\eeqn
 Then $\{\, h_1, ... ,  h_n,  h_\de\}$ is the basis of  $V^{(1)\ast}$ 
dual to the basis $\{ \al_1, ...,  \al_n, \de\}$ of $V^{(1)}$. 

There are two pictures of the actions of $W\bigl(A^{(1)}\bigr)$. In one they act linearly on the simple affine roots  $\{ \al_0,  \al_1, ..., \al_n\}$ in $V^{(1)}$ by formula \eqn{sia} for the reflections $s_i$ ($i\in \{0,1, ..., n\}$). In  the second picture,
the actions can be seen as affine transformations in a hyperplane of $V^{(1)\ast}$,
which give rise to translational motions that are nonlinear. We give a detailed explanation below.
Define a hyperplane $H$ in $V^{(1)\ast}$ by,
\beqn\label{H}
H=\{\,h\in V^{(1)\ast}\,|\, \lan \de, h\ran=1\}.
\eeqn
Then generators $s_{\al-k\de}$ associated with the affine roots $\al-k\de\in \Phi^{(1)}$ act on $h \in H$ by the formula
 \beqn\label{sakd}
s_{\al-k\de}.h=h-\left( \lan \al, h\ran-k\right) \oc \al,
\eeqn
where $\oc\al\in \oc \Phi$, $k\in \mb{N}$.
The generator $s_{\al-k\de}$ describes reflection about the hyperplane $H_{\al-k\de}$ defined by , 
\beqn\label{Hk}
H_{\al-k\de}=\{\,h \in H\; |\; \lan \al, h\ran=k\}, \quad\mbox{note that}\quad  H_{\al-k\de}=H_{-\al+k\de}.
\eeqn
Observe that $s_{\al-k\de}$ defined in Equation \eqn{sakd} are reflections about reflection planes
that do not pass through the origin for $k\neq 0$.
The hyperplane $H$ is called an {\it affine plane}.
In particular, we see that
\beqn\label{sakd0}
s_0. h=s_{\al_0}.h=s_{\de-\ot\al}.h=s_{\ot\al-\de}.h=h-\left(\lan \ot\al, h\ran-1\right)\oc {\ot\al},
\eeqn
and it follows that
\begin{align*}
 &t_{\ot \al}=s_{0}s_{\ot\al}. h=s_{\de-\ot\al}s_{\ot\al}.h=h+\oc {\ot\al},
\end{align*} 
where we have denoted the element of $W\bigl(A^{(1)}\bigr)$ associated with translation of $\oc {\ot\al}$ on $H$ as $t_{\ot \al}$. 
It is known that the affine Weyl group decomposes into the semidirect product of translations in the coroot lattice
and the finite Weyl group:
\beqn
W\bigl(A^{(1)}\bigr)=\langle t_{\al_1}, ..., t_{\al_n}\rangle\rtimes W(A)=T_{\oc Q} \rtimes W(A),
\eeqn
where $t_{ \al_i}\in W\bigl(A^{(1)}\bigr)$ for $(i=1, ..., n)$, $ \al_i\in  \De$. The group
$W\bigl(A^{(1)}\bigr)$ can be further extended
by Dynkin diagram automorphisms to the $\it{extended}$ affine Weyl group $\widetilde{W}(A^{(1)})$, which  
decomposes into the semidirect product of the translations in the weight lattice
and the finite Weyl group acting on the weight lattice:
\beqn
\widetilde{W}(A^{(1)})=\langle t_{h_1}, ..., t_{h_n}\rangle\rtimes W(A)=T_{P} \rtimes W(A),
\eeqn
where $t_{h_i}\in \widetilde{W}(A^{(1)})$ $(i=1, ..., n)$ are the translational elements 
associated to the fundamental weights. Note that we denote the Dynkin diagram of $\widetilde{W}(A^{(1)})$ as $\widetilde{\Ga}^{(1)}(A)$.

Now we give the definition of the translational elements and their actions on $V^{(1)}$ and $H\subset V^{(1)\ast}$ respectively.
\subsubsection{Translational elements of the affine and extended affine Weyl groups}

{\Def [\cite{kac:90}] Let $\mu\in V$, so that $\lan\de, \oc\mu\ran=0$ and $\lan \mu, \oc \mu\ran\neq0$.
Translational element associated to $\mu$ in $W(A^{(1)})$ is given by

\begin{align}\nonumber%
 t_{\mu}&=s_{\de-\mu}s_{\mu},\\[-1.5em]\label{transs}
 \intertext{ and}\label{tat}
wt_\mu w^{-1}&=t_{w.\mu},\quad w\in W(A^{(1)}).
\end{align}
The action of $t_\mu$ on $H$ is given by
\beqn\label{Transn}
t_{\mu}. h= h+\oc\mu, 
\eeqn
 }
and its action on the simple affine roots $\al_{i} \in V^{(1)}$ ($i\in \{0,1, ..., n\}$) is given by
 \beqn\label{Transl}
 t_{\mu}.\al_i=\al_i-\lan\al_i, \oc\mu\ran\de=\al_i-\mu_i \de,
 \eeqn
where we have let $\lan\al_i, \oc\mu\ran=\mu_i$. Hence $t_{\mu}$ shifts $\al_i$ by $-\mu_i$ multiples of $\de$.
Observe that the $\mu_i$ ($i\in \{0,1, ..., n\}$) satisfy the constraint
\begin{align}\label{Tranc}
  0=\lan\de, \oc\mu\ran&=\lan\sum_{i=0}^{n}c_{i}\al_{i}, \oc\mu\ran=\sum_{i=0}^{n}c_{i}\mu_i.
\end{align}
We refer to the linear action of $t_{\mu}$ on $\al_{i} \in V^{(1)}$ ($i\in \{0,1, ..., n\}$) given by Equation \eqn{Transl} as {\it shift motion} on the simple affine roots.

In what follows we give an explicit example of the ideas explained above
for the root system of type $A_2$.

{\eg \label{TranA2}
Affine and extended affine Weyl group of type $A_2$
are: ${W}(A_2^{(1)})=\langle s_0, s_1, s_2 \rangle$ and $\widetilde{W}(A_2^{(1)})=\langle s_0, s_1, s_2, \rho\rangle$, respectively, where $\rho$ is
the Dynkin diagram automorphism:
\beqn\label{rhor}
\rho : (\al_0,\al_1, \al_2)\mapsto (\al_1,\al_2, \al_0).
\eeqn
The generators satisfy the following relations
\bee\label{funWaA2}
&&s_j^2=1,\;\;(s_js_{j+1})^3=1,\;\;(j\in \mb Z/3\mb Z),\\\nonumber
&&\rho^3=1, \;\;\rho s_j=s_{j+1}\rho,
\ee
which correspond to the Dynkin diagram in Figure \ref{DaA21}.    
     
\begin{figure}     
\begin{tikzpicture}
\node  (a1) {$\circ$};
\node [right=of a1](a2) {} ;
\node [right=of a2](a3) {$\circ$} ;
\node (a4) at (1.4, -0.9){$\circ$} ;
\draw (a1) node [anchor=south] {$s_1$} ;
\draw (a2) node [anchor=south] {} ;
\draw (a3) node [anchor=south] {$s_{2}$} ;
\draw (a4) node [anchor=north] {$s_{0}$} ;
\draw[-] (a1) -- node {} (a3);
\draw[-] (a1) -- node {} (a4);
\draw[-] (a3) -- node {} (a4);
\path[use as bounding box] (-1.5,0) rectangle (0,0);
\draw (1.4, 0.5) node  {$\curvearrowright$};
\draw (0.4, -0.9) node [rotate=135] {$\curvearrowright$};
\draw (2.4, -0.9) node [rotate=-135] {$\curvearrowright$};
\draw  (1.4, 0.3) node {$\rho$};
\end{tikzpicture}      
\caption{$\widetilde{\Ga}^{(1)}(A_2)$}\label{DaA21}
\end{figure}

Using the equations \eqn{de},  \eqn{transs} and \eqn{tat} we have the translational elements:
\begin{align}
&t_{\ot\al}=s_{\de-\ot\al}s_{\ot \al}=s_{\al_0}s_{\ot \al},\\
&t_{\al_1}=s_2s_{\al_0}s_{\ot \al}s_2,\\
&t_{\al_2}=s_1s_{\al_0}s_{\ot \al}s_1.
\end{align}
The actions of $t_{\ot\al}$, $t_{\al_1}$ and $t_{\al_2}$ on $V^{(1)}$ given 
 by Equation \eqn{Transl} imply the following shift motions on the affine simple roots :
\begin{align}
&t_{\ot\al} : (\al_0,\al_1, \al_2)\to(\al_0+2\de,\al_1-\de, \al_2-\de),\\
&t_{\al_1} : (\al_0,\al_1, \al_2)\to(\al_0+\de,\al_1-2\de, \al_2+\de),\\
&t_{\al_2} : (\al_0,\al_1, \al_2)\to(\al_0+\de,\al_1+\de, \al_2-2\de).
\end{align}
Observe that the above transformations satisfy the constraint \eqn{Tranc}.
The translational actions on the affine plane $h\in H$ is given by:
\beqn
t_{\al}. h= h+\oc \al,\quad  \al\in \{\ot \al,  \al_1, \al_2\}.
\eeqn
To describe translations correspond to
the fundamental weights $h_1, h_2$, we need the extended affine Weyl group  
$\widetilde{W}\bigl(A_2^{(1)}\bigr)=T_{ P} \rtimes W(A_2)=\langle t_{h_1},  t_{h_2}\rangle\rtimes W(A_2)$.
They are given by:
\beqn\label{Th1}
t_{h_1}=\rho s_2 s_1, \quad \mbox{and} \quad t_{h_2}=\rho^{-1} s_1 s_2.
\eeqn

Their actions result in shift motions on the simple affine roots given by:
\begin{align}\label{th1}
 &t_{h_1} :(\al_0,\al_1, \al_2)\to(\al_0+\de,\al_1-\de, \al_2),  \\\label{th2}
 &t_{h_2} :(\al_0,\al_1, \al_2)\to(\al_0+\de,\al_1, \al_2-\de),
\end{align}
and translational motions in the affine hyperplane $H$ are given by: 
\beqn
t_{h_i}. h= h+ h_i, \quad (i=1,2).
\eeqn
}

If we write $t_1=t_{h_1}$,
and let $\rho t_i=t_{i+1}\rho, (i=1,2,3)$, then we have:  
\begin{equation}\label{TaA2}
 t_1=\rho s_2s_1,\quad
 t_2=\rho s_0 s_2,\quad
 t_3=\rho s_1s_0,
\end{equation}
which correspond to the translations by: $\frac{1}{3}(2\e_1-\e_2-\e_3)$, $\frac{1}{3}(-\e_1+2\e_2-\e_3)$
and $\frac{1}{3}(-\e_1-\e_2+2\e_3)$, respectively.  They satisfy the relation
\beqn\label{TaA2p}
t_1t_2t_3=1.
\eeqn
The actions of $t_2$ and $t_3$ on the simple affine roots are given by:
\begin{align}\label{t2}
 &t_{2} :(\al_0,\al_1, \al_2)\to(\al_0,\al_1+\de, \al_2-\de),  \\\label{t3}
 &t_{3} :(\al_0,\al_1, \al_2)\to(\al_0-\de,\al_1, \al_2+\de).
\end{align}

We summarise the ideas explained in this section using Figure
\ref{a2V lattice}. It shows the affine plane $H$ on which $\widetilde{W}(A_2^{(1)})$ act as affine transformations by reflections about the reflection lines $H_{\al+k\de}$, and rotation about the center of the fundamental simplex $F$ by $\rho$.
The fundamental simplex $F$ is bounded by three red reflection lines 
$H_{\al_0}, H_{\al_1}$ and $H_{\al_2}$ that correspond to the three simple
reflections $s_0, s_1$ and $s_2$. 
It is
the triangle with vertices $U_0$, $U_{1}$, $U_{12}$, where $U_0=h_0$ is the origin $\mathbf 0$ and $U_1=h_1, U_{12}=h_2$ are the two
fundamental weights. 
The hexagon with vertices $U_{1}, U_{2}, U_{3}, U_{12}, U_{13}, U_{23}$ centered at $U_0$ is {\rm Vor}$(A_2)$.  
We have also shown the vertex $\oc {\ot \al}=U_{112}=U_{12}+U_{1}$.
Any point on the $A_2$ weight lattice can be obtained from the three vertices of $F$ (which we call the initial value set): $\{U_0$, $U_{1}$, $U_{12}\}$
using the elements of $\widetilde{W}(A_2^{(1)})$. The transformations on the the initial value set is given by
\begin{align}\nonumber
&s_1:\{U_1\lra U_2\},\\\label{nataA2}
&s_2:\{U_{12}\lra U_{13}\},\\\nonumber
&s_0:\{U_{0}\lra U_{112}\},\\\nonumber
&\rho :\;(U_0,U_1, U_{12})\mapsto (U_1,U_{12}, U_0),
\end{align}
which satisfy the defining relations of $\widetilde{W}(A_2^{(1)})$ given in Equation \eqn{funWaA2}.
We have also shown the translational actions of 
$t_{\al_1}$, $t_{\ot\al}$, $t_{1}, t_{2}$ and $t_{3}$ on $F$. 

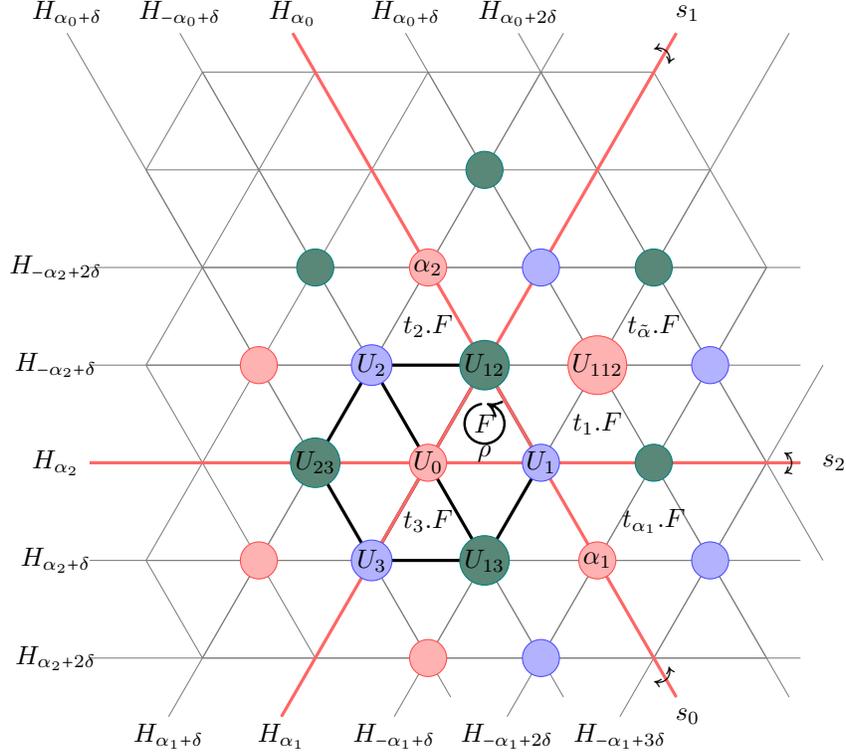
\begin{figure}[ht!] 
\centering
  \begin{tikzpicture}[scale=1.5, every annotation/.style={fill=white}
  ,every node/.style={circle,radius=0.7cm}]]


               \foreach \s in {2,...,6} \foreach \t in {-3,-1,1}
		{

\draw [gray] (0.5+\s,{(\t+2)*sqrt(3)/2}) -- (0+\s, {(\t+3)*sqrt(3)/2});
\draw  [gray] (-0.5+\s,{(\t+2)*sqrt(3)/2}) -- (0+\s, {(\t+3)*sqrt(3)/2});
\draw  [gray](-0.5+\s,{(\t+2)*sqrt(3)/2}) -- (0.5+\s, {(\t+2)*sqrt(3)/2});
                }      
                   \foreach \s in {2,...,6} \foreach \t in {-2,0,2}
		{

\draw  [gray](0+\s,{(\t+2)*sqrt(3)/2}) -- (-0.5+\s, {(\t+3)*sqrt(3)/2});
\draw [gray](-1+\s,{(\t+2)*sqrt(3)/2}) -- (-0.5+\s, {(\t+3)*sqrt(3)/2});
\draw  [gray](-1+\s,{(\t+2)*sqrt(3)/2}) -- (0+\s, {(\t+2)*sqrt(3)/2});
                }  
            
\draw  [gray](6,{sqrt(3)})--(6.5,{1.5*sqrt(3)});
\draw  [gray](6,0)--(6.5,{0.5*sqrt(3)});

\draw  [gray](1,0)--(1.5,-{0.5*sqrt(3)});
\draw  [gray] (1,{sqrt(3)})--(1.5,{0.5*sqrt(3)});
\draw  [gray] (1,{2*sqrt(3)})--(1.5,{1.5*sqrt(3)});
 \draw  [gray] (1.5,{2.5*sqrt(3)})--(5.5,{2.5*sqrt(3)});
 

 \foreach \s in {-1,0,1}
 { 
\draw  [gray]  (2.2+\s, -{0.8*sqrt(3)})-- (5.7+\s, {2.7*sqrt(3)});
 }   
 \draw  [gray]  (4.2, -{0.8*sqrt(3)})-- (6.5, {1.5*sqrt(3)});
 \draw   [gray]  (5.2, -{0.8*sqrt(3)})-- (7, {sqrt(3)});
 \foreach \s in {-1,0,...,3}
 { 
\draw  [gray] (0.5, {\s*0.5*sqrt(3)})--(6.8, {\s*0.5*sqrt(3)});
 }  
 \foreach \s in {-1,...,2}
 { 
\draw  [gray]   (4.7+\s, -{0.7*sqrt(3)})-- (1.3+\s, {2.7*sqrt(3)});
 }  
\draw  [gray]   (7, 0)-- (1.3+3, {2.7*sqrt(3)});

\foreach \s in {0.5} 
{
\draw[very thick,black] (2.5+\s, 0) -- (2+\s, {0.5*sqrt(3)}) ; 
\draw[very thick,black] (2+\s, {0.5*sqrt(3)}) -- (2.5+\s, {sqrt(3)})  ;
\draw[very thick,black] (2.5+\s, {sqrt(3)}) -- (3.5+\s, {sqrt(3)})  ;
\draw[very thick,black] (3.5+\s, {sqrt(3)}) -- (4+\s, {0.5*sqrt(3)});
\draw[very thick,black](4+\s, {0.5*sqrt(3)}) -- (3.5+\s, 0)  ;
\draw[very thick,black] (3.5+\s, 0) -- (2.5+\s, 0)  ;
\draw[very thick,black]   (2.5+\s, {sqrt(3)})--(3.5+\s, 0);
\draw[very thick,black]    (2.5+\s, 0)-- (3.5+\s, {sqrt(3)});

}
\draw[very thick, notred]    (2.2, -{0.8*sqrt(3)})-- (5.7, {2.7*sqrt(3)});
\draw[very thick,notred] (0.5, {0.5*sqrt(3)})--(6.8, {0.5*sqrt(3)});
\draw[very thick, notred]    (5.7, -{0.7*sqrt(3)})-- (2.3, {2.7*sqrt(3)});
\node [root] at (3.5, {0.5*sqrt(3)})  {};
\node  [leaf] at (4.5, {0.5*sqrt(3)})  {};
\node  [gr] at (4, {sqrt(3)}) {};
\node [root] at (2, {sqrt(3)}) {};
\node [root] at (5, {sqrt(3)}) {$U_{112}$};
\node [root] at (2, 0){};
\node [root] at (5, 0) {};
\node [root] at (3.5, {1.5*sqrt(3)})  {};
\node [root] at (3.5, {-0.5*sqrt(3)})  {};
\node [leaf] at (3, {sqrt(3)}){$U_{2}$};
\node  [leaf] at (4.5, {1.5*sqrt(3)}){};
\node  [leaf] at (4.5, {-0.5*sqrt(3)}){};
\node [leaf] at (3, 0){$U_{3}$};
\node [leaf] at (6, 0) {};
\node [leaf] at (6, {sqrt(3)}) {};
\node at (3.5, {0.5*sqrt(3)})  {$U_0$};
\node   at (4.5, {0.5*sqrt(3)})  {$U_1$};
\node at (5, 0) {$\al_1$};
\node  at (3.5, {1.5*sqrt(3)})  {$\al_2$};
\node [gr] at (2.5, {0.5*sqrt(3)}){$U_{23}$};
\node [gr] at (2.5, {1.5*sqrt(3)}){};
\node  [gr] at (4, {2*sqrt(3)}){};
\node  [gr] at (4, {sqrt(3)}) {$U_{12}$};
\node  [gr] at (4, 0){$U_{13}$};
\node [gr] at (5.5, {0.5*sqrt(3)}) {};
\node [gr] at (5.5, {1.5*sqrt(3)}) {};

\draw (0.2, {1.5*sqrt(3)}) node {$H_{-\al_2+2\de}$};
\draw (0.2, {sqrt(3)}) node {$H_{-\al_2+\de}$};
\draw (0.2, {0.5*sqrt(3)}) node {$H_{\al_2}$};
\draw (0.2, 0) node {$H_{\al_2+\de}$};
\draw (0.2, {-0.5*sqrt(3)}) node {$H_{\al_2+2\de}$};
\draw (7.1, {0.5*sqrt(3)}) node {$s_2$};
\draw (6.7, {0.5*sqrt(3)}) node [rotate=180] {$\curvearrowupdown$};

\draw (2.2, -{0.9*sqrt(3)}) node {$H_{\al_1}$};
\draw (1.2, -{0.9*sqrt(3)}) node {$H_{\al_1+\de}$};
\draw (3.2, -{0.9*sqrt(3)}) node {$H_{-\al_1+\de}$};
\draw (4.2, -{0.9*sqrt(3)}) node {$H_{-\al_1+2\de}$};
\draw (5.2, -{0.9*sqrt(3)}) node {$H_{-\al_1+3\de}$};
\draw (5.8, -{0.8*sqrt(3)}) node {$s_0$};
\draw (5.6, -{0.6*sqrt(3)}) node [rotate=135] {$\curvearrowupdown$};

\draw (1.3, {2.8*sqrt(3)}) node {$H_{-\al_0+\de}$};
\draw (2.3, {2.8*sqrt(3)}) node {$H_{\al_0}$};
\draw (5.8, {2.8*sqrt(3)}) node {$s_1$};
\draw (5.6, {2.6*sqrt(3)}) node [rotate=-135] {$\curvearrowupdown$};
\draw (3.3, {2.8*sqrt(3)}) node {$H_{\al_0+\de}$};
\draw (4.3, {2.8*sqrt(3)}) node {$H_{\al_0+2\de}$};
\draw (0.3, {2.8*sqrt(3)}) node {$H_{\al_0+\de}$};
\draw (4, {0.7*sqrt(3)}) node {$F$};
\draw (5, {0.7*sqrt(3)}) node {$t_{1}.F$};
\draw (3.5, {0.2*sqrt(3)}) node {$t_{3}.F$};
\draw (3.5, {1.2*sqrt(3)}) node {$t_{2}.F$};
\draw (5.5, {0.2*sqrt(3)}) node {$t_{\al_1}.F$};
\draw (5.5, {1.2*sqrt(3)}) node {$t_{\ot \al}.F$};
\draw (4, {0.7*sqrt(3)}) node {{\Huge$\circlearrowleft$}};
\draw (4, {0.55*sqrt(3)}) node {$\rho$};
\end{tikzpicture}
\caption{Translational actions of $\widetilde{W}(A_2^{(1)})$ on the affine plane $H$}\label{a2V lattice}
\end{figure}
Having introduced the preliminaries of the Weyl groups we are now ready to discuss them in the context of discrete
integrable systems.

\section{Discrete integrable systems on Voronoi cells -- a combinatorial construction}\label{DIS}
 An important class of discrete integrable equations, known as quad-equations, were
  classified by Adler-Bobenko-Suris \cite{abs:03}. 
 They considered multi-dimensionally consistent, affine linear 
 equations, which relate values of a function on vertices of a quadrilateral. These equations
 are the discrete analogues of many well-known nonlinear integrable partial differential equations.
 
 Our construction relies on reformulating systems of quad-equations as birational group
 representations of the Weyl groups. In particular, we associate the system of quad-equations
with a polytope of the Weyl group, where quad-equations are associated with some quadrilateral 
sub-structures of the polytope. The consistency and symmetry of the system of equations
are the consequences of the combinatorial structure and the symmetry of the polytope.
 
 To formulate discrete dynamical systems as birational groups is an approach well-established in 
 the studies of quadrirational Yang Baxter maps \cite{ves:07} and discrete \Pa equations \cite{NY:98, sak:01, KMNOY:06}.
 In the context of quad-equations, it was first used by
Atkinson \cite{akt:13} 
in his work on Yang Baxter maps as birational representation of a sequence of Coxeter
groups (associated with the connected T-shaped Coxeter-Dynkin diagram). Our construction is different
from those of  \cite{akt:13}, in that we allow different quad-equations in a system.
Consequently,  we can construct 
systems with symmetries associated with the disconnected Dynkin diagrams.

In order to associate equations and variables with the polytopes discussed in the previous section,  
 let us consider the function $x:  \{X_S\} \to \mb{C}$, on the vertices of {\rm Vor}$(B_n)$ ($n$-cube),
we write
\begin{align}
 &x_S=x(X_S), \label{xfun} 
\end{align}
where $S\subseteq\{1, ..., n\}$,
so that the function  $x$ is indexed by the weight vectors of the $B_n$ root system,
 and elements $w\in W(B_n)$ act on the variables $x_S$ by
 \begin{align}
 &w(x_S)=x(w.X_S)=x(X_{w.S})=x_{w.S}\label{wxfun}.
\end{align}

Similarly, consider the function $u: \{U_S\} \to \mb{C}$ on the vertices of Vor$(A_{n-1})$,
we write
\begin{align}
 &u_S=u(U_S), \label{ufun} 
\end{align}
where $S\subseteq\{1, ..., n\}$, $1\leq |S| \leq n-1$, and $u_0=u({\bf 0})$.
The function  $u$ is indexed by the weight vectors of the $A_{n-1}$ root system,
in particular the centre and the vertices of {\rm Vor}$(A_{n-1})$. Elements of 
$w\in \widetilde{W}\bigl(A_{n-1}^{(1)}\bigr)$ act on the variables $u_S$ by
 \begin{align}
 &w(u_S)= u(w. U_S)= u( U_{w.S})= u_{w.S}\label{wufun}.
\end{align}

In what follows we give as examples the construction of two systems of quad-equations with different
 combinatorial structures: one associated with the $3$-cube ($W(B_3)$ symmetry)
 and the other with the asymmetric
 $3$-cube ($W(B_2+A_1)$ symmetry),  and give their reductions to some discrete \Pa type equations.
 
 \subsection{A system of quad-equations with $W(B_3)$ symmetry}\label{DISB3}
 We construct a system of six l-mKdV equations consistent on {\rm Vor}$(B_{3})$ (a $3$-cube), 
 hence showing that the system has $W(B_3)$ symmetry.
 
 The l-mKdV equation is defined by :
 \beqn\label{f13}
 Q(x_0,x_1, x_3, x_{13} ;a_1,a_3)
=x_{1}x_{13}-x_0x_{3}+\frac{a_1}{a_3}(x_0x_{1}-x_{3}x_{13})=0,
\eeqn
where the variables $\{x_0,x_1, x_3, x_{13}\}$ are assigned to the vertices of 
the quadrilateral $\{X_0,X_1, X_3, X_{13}\}$, and $a_i$ are parameters associated with the edges.

{\prop [\cite{abs:03}] \label{abs:03}
A system of six such quad-equations, associated to the two-dimensional faces (2-faces) of the $3$-cube (see Figure \subref{3cube}),
\begin{subequations}\label{f}
\begin{align}\label{ff13}
&Q(x_0,x_{1},x_{3},x_{13};a_1, a_3)=0,\\\label{f12}
&Q(x_0,x_1, x_2, x_{12} ;a_1,a_2)=0,\\\label{f23}
&Q(x_0,x_2, x_3, x_{23} ;a_2,a_3)=0,\\\label{f123}
&Q(x_{3},x_{13}, x_{23}, x_{123} ;a_1,a_2)=0,\\\label{f132}
&Q(x_2,x_{12}, x_{23}, x_{123} ;a_1,a_3)=0,\\\label{f231}
&Q(x_1, x_{13}, x_{12}, x_{123} ;a_3,a_2)=0,
\end{align}
\end{subequations}
is said to be consistent on the $3$-cube if given four initial values $\{x_0,x_1, x_2, x_{3}\}$ (indicated by
black nodes), if
$x_{123}$ can be uniquely and consistently determined from the equations \eqref{f123}--\eqref{f231}.
}                                                                                         

The natural action of $W(B_3)$ on the index set $S$ of $X_S$ (given by Equation \eqn{natB3})
is extended to that on the variables $x_S$ by Equation \eqn{wxfun}. The equations in \eqn{f} can be
obtained from Equation \eqn{f13} by applying the following elements $\{1, s_2, s_1, s_3s_2, s_2s_3s_2, s_1s_2s_3s_2\}$
of $W(B_3)$, respectively. 

\begin{figure}\label{Fb3a2}
\begin{subfigure}[b]{.4\textwidth}
\hs{-0.5cm}
\resizebox{\linewidth}{!}{
\begin{tikzpicture}[scale=0.4]	
\filldraw [black] (0,0) node [anchor=north east] {$x_3$}  circle(1ex);
\draw [black]  (0,5) node [anchor=east] {$x_{23}$ } ;
\draw [black] (5,0) node [anchor=north west] {$x_{13}$ } ;
\draw [black](5,5) node [anchor=west] {$x_{123}$ } ;

\filldraw [black]  (2,2) node [anchor=east] {$x_0$} circle(1ex);
\filldraw [black] (2,7) node [anchor=east] {$x_2$} circle(1ex);
\filldraw [black]  (7,2) node [anchor=west]{$x_1$} circle(1ex);
\draw [black]  (7,7) node [anchor=west] {$x_{12}$ };

\draw[very thick, loosely dotted] (2,2)--(5,5);

\draw [thick] (0,0) -- (0,5);
\draw [thick] (0,0) -- (5,0);
\draw [thick] (5,0) -- (5,5);
\draw [thick] (0,5) -- (5,5);

\draw [thick] (5,0) -- (7,2);
\draw [thick] (5,5) -- (7,7);
\draw [thick] (0,5) -- (2,7);
\draw [thick] (2,7) -- (7,7);
\draw [thick] (7,2) -- (7,7);

\draw [thick](0,0) -- (2,2);
\draw [thick](2,2) -- (2,7);
\draw [thick] (2,2) -- (7,2);

\end{tikzpicture}
}
\subcaption{A system of quad-equations with initial values $\{x_0,x_1, x_2, x_{3}\}$.}
\label{3cube}
\end{subfigure}
\begin{subfigure}[b]{.5\textwidth}
\begin{center}
\begin{tikzpicture}[scale=1.5]
 		\foreach \s in {2,3,4,5}
		{
\draw[dashed] (0+\s,0) -- (0.5+\s, 0.866025403784);
\draw[dashed] (1+\s, 0.866025403784) -- (-1+\s, 0.866025403784);
\draw[dashed] (-0.5+\s, 0.866025403784) -- (-1+\s,0) ;
\draw[dashed] (-1+\s,0) -- (\s,0);
\draw[dashed] (-0.5+\s, -0.866025403784) -- (0.5+\s, -0.866025403784) ;
\draw[dashed] (-0.5+\s, 0.866025403784)--(0+\s, 0);

               }

\foreach \s in {2,3,4}
		{
\draw[dashed] (\s,0) -- (0.5+\s, -0.866025403784);
\draw[dashed] (\s,0) -- (-0.5+\s, -0.866025403784);
\draw[dashed] (0.5+\s, -0.866025403784) -- (1+\s,0);
\draw[dashed] (1+\s,1.73205080757) -- (0.25+\s, 2.89807621135);
\draw[dashed] (0+\s,1.73205080757) -- (0.5+\s, 2.59807621135);
\draw[dashed] (0.5+\s,2.59807621135) -- (0.75+\s, 3.03105);
\draw[dashed] (\s,1.73205080757) -- (-0.5+\s, 0.866025403784);
		}

 		\foreach \s in {2,3,4,5}
		{

\draw[dashed] (0.5+\s, 2.59807621135) -- (-0.5+\s, 2.59807621135);
\draw[dashed] (-1+\s,1.73205080757) -- (\s,1.73205080757);
\draw[dashed] (-0.5+\s, 0.866025403784) -- (0.5+\s, 0.866025403784) ;
\draw[dashed] (\s,1.73205080757) -- (0.5+\s, 0.866025403784);
\draw[dashed] (0.5+\s, 0.866025403784) -- (1+\s,1.73205080757);
		}
\draw[thick] (3, 0) -- (2.5, 0.866025403784) ; 
\draw[thick] (2.5, 0.866025403784) -- (3, 1.73205080757)  ;
\draw[thick] (3, 1.73205080757) -- (4, 1.73205080757)  ;
\draw[thick] (4, 1.73205080757) -- (4.5, 0.866025403784);
\draw[thick] (4.5, 0.866025403784) -- (4, 0)  ;
\draw[thick] (4, 0) -- (3, 0)  ;

\filldraw [black]  (6.2, 0.866025403784) node   {\colorbox{white}{$s_2$}} circle(0.1ex);
\filldraw [black]  (4.75, 3.13205080757) node  {\colorbox{white}{$s_1$}} circle(0.1ex);
\filldraw [black]  (3, 3.13205080757) node  {\colorbox{white}{$s_0$}} circle(0.1ex);

\filldraw [black]  (4, 0.53) node  {\colorbox{white}{{\footnotesize$s_2.F$}}} circle(0.1ex);
\filldraw [black]  (4, 1.23) node  {\colorbox{white}{{\footnotesize$F$}}} circle(0.1ex);
\filldraw [black]  (3.5, 0.33) node  {\colorbox{white}{{\footnotesize$s_2s_1.F$}}} circle(0.1ex);
\filldraw [black]  (3.5, 1.53) node  {\colorbox{white}{\footnotesize{$s_1.F$}}} circle(0.1ex);
\filldraw [black]  (3, 0.63) node  {\colorbox{white}{{\footnotesize$s_2s_1s_2.F$}}} circle(0.1ex);
\filldraw [black]  (3, 1.03) node  {\colorbox{white}{{\footnotesize$s_1s_2.F$}}} circle(0.1ex);
\filldraw [black]  (5, 1.23) node  {\colorbox{white}{{\footnotesize$t_1.F$}}} circle(0.1ex);

\filldraw [black]   (3.5, 0.866025403784) node [mybox] (box) {\bf $u_0$} circle(0.1ex);
\filldraw [black]  (4.5, 0.866025403784) node   [mybox] (box) {\bf $u_1$} circle(0.1ex);
\filldraw [black]  (3, 1.73205080757) node  {\colorbox{white}{$u_2$}} circle(0.1ex);
\filldraw [black]  (3, 0) node  {\colorbox{white}{$u_3$}} circle(0.1ex);
\filldraw [black]  (2.5, 0.866025403784) node  {\colorbox{white}{$u_{23}$}} circle(0.1ex);
\filldraw [black]  (4, 1.73205080757) node   [mybox] (box) {\bf $u_{12}$} circle(0.1ex);
\filldraw [black]  (4, 0) node  {\colorbox{white}{$u_{13}$}} circle(0.1ex);
\filldraw [black]  (5, 1.73205080757) node  {\colorbox{white}{$u_{112}$}} circle(0.1ex);
\filldraw [black]  (5.5, 0.866025403784) node  {\colorbox{white}{$u_{11}$}} circle(0.1ex);
\draw (5.9, 0.866025403784) node [rotate=-180] {$\curvearrowupdown$};
\draw (3.27, 2.83205080757) node [rotate=-45] {$\curvearrowupdown$};
\draw (4.66, 2.83205080757) node [rotate=-135] {$\curvearrowupdown$};
\end{tikzpicture}
\end{center}
\subcaption{}
\label{cth}
\end{subfigure}
\end{figure}
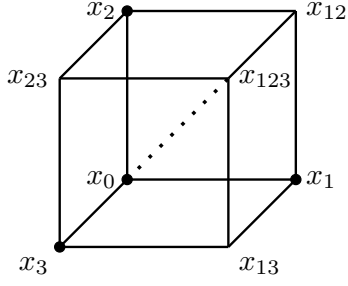

The quad-equation \eqn{f13} being affine linear means
that any one of the four variables can be expressed rationally in terms of the other three, given as initial values. 
For example we can write
\;
\beqn\label{x13}
x_{13}=\frac{x_0(-a_1 x_1+a_3 x_3)}{a_3 x_1-a_1x_3}.
\eeqn
Similarly, we can express $x_{12}, x_{23}$ and $x_{123}$ all in terms of $\{x_{0},x_{1},x_{2},x_{3}\}$. 

Let $\mb{C}(a; x)$ be the field of rational functions 
in $a_j (j\in\{1,2,3\})$ and $x_i (i\in I=\{0,1,2,3\})$. 
 In order to obtain a representation
of $W(B_3)$ on the initial value set $x_i (i\in I=\{0,1,2,3\})$ we need some appropriate 
algebraic relations among the variables $x_S$ which are associated with the
vertices of the $3$-cube. To this end, we use the quad-equations associated with the $2$-faces 
of the cube such as expression \eqn{x13}. The whole system \eqn {f} involving eight variables then can be reformulated as a rational system of 
initial value set
$\{x_{0},x_{1},x_{2},x_{3}\}$, induced from the actions of the symmetry group of the $3$-cube, that is $W(B_3)$.

{\Def \label{DefactB3}
The actions on the variables $x_S$ and $a_i$ to be associated with the generators of $W(B_3)$ on the initial value set
$\{x_{0},x_{1},x_{2},x_{3}\}$
are defined as:
\begin{align}\nonumber
&s_1:\{x_1\lra x_2,\quad a_1\lra a_2\},\\\label{actioneB3o2}
&s_2:\{x_2\lra x_3,\quad a_2\lra a_3\},\\\nonumber
&s_3:\{x_0\lra x_3,\quad
x_{1}\to \frac{x_0(-a_1 x_1+a_3 x_3)}{a_3 x_1-a_1x_3},\quad 
x_{2}\to \frac{x_0(-a_2 x_2+a_3 x_3)}{a_3 x_2-a_2x_3},\\\nonumber
&\quad a_1\to -a_1,\quad a_2\to -a_2\}.
\end{align}
}

{\prop \label{sB3}
The associated actions of $s_1$, $s_2$ and $s_3$ defined as above satisfy the fundamental relations in \eqn{funWB3} 
for the simple reflections of $W(B_3)$, corresponding to the Dynkin diagram in Figure \subref{DB3}.
That is, the actions of these generators define a birational representation of $W(B_3)$ on the field $\mb{C}(a; x)$ of rational
functions. Furthermore, system \eqn{actioneB3o2} is equivalent to the system of quad-equations \eqn{f}.
}
\begin{proof}
It can be checked by direct computation that transformations in \eqn{actioneB3o2} satisfy the fundamental relations in \eqn{funWB3}.  
\end{proof}
In what follows, we give a reduction of system \eqn{f} to
a discrete equation of \Pa type using the relation between {\rm Vor}$(B_3)$ and {\rm Vor}$(A_{2})$
given in Example \ref{B3}.

\subsection{Reduction to a $q$-discrete \Pa equation with $\widetilde{W}\bigl(A_2^{(1)}\bigr)$ symmetry}\label{DISA2}

{\prop
 On system \eqn{f} apply the map $\phi$  
 associated with the orthogonal projection $\phi$ define in \eqn{phi}
 along $X_{123}=\frac{1}{2}(\e_1+\e_2+\e_3)$,
 \beqn\label{pc}
 \begin{cases}
 &\phi\,(x_S)=u_S,\quad 1\leq |S| \leq2,\\
 &u_{123}=\phi\, (x_{123})=\phi\, (x_0)=u_0,
 \end{cases}
\eeqn
we have the ``periodic condition'' (indicated by the dotted line in Figure \subref{3cube}):
\beqn\label{conp}
t_{1}t_{2}t_{3}(u_S)=u_{S\cup \{1, 2, 3\}}=u_{S},
\eeqn
which puts the following constraints the edge parameters $a_i$:
\beqn\label{conpp}
t_{j}(a_i)=
\begin{cases}
a_i&\text{for $j\neq i$}\\
qa_i&\text{for $j= i$}
\end{cases},
\eeqn
where $q\in\mathbb{C}^\ast$ and $t_{j}(u_S)=u_{S\cup \{j\}}$ (defined in \eqn{TaA2})
is the one-step shift in direction $j$ on the $A_2$ weight lattice (see Figure \ref{a2V lattice}).
System \eqn{f} is reduced to 

\begin{subequations}\label{actioneA2o1}
\bee\label{w112}
&&u_{112}=\frac{u_{0}(q a_1 u_1+a_3 u_{12})}{a_3 u_1+q a_1 u_{12}},\\\label{w12}
&&u_{2}=\frac{u_{1}(a_1 u_0+a_2 u_{12})}{a_2 u_0+a_1 u_{12}},\\\label{w13}
&&u_{13}=\frac{u_{12}(a_1 u_1+a_3 u_{0})}{a_3 u_1+a_1 u_{0}}.
\ee
\end{subequations}
}

\begin{proof} 
It can be easily checked that condition \eqn{pc} on system \eqn{f} leaves constraints 
\eqn{conp}--\eqn{conpp} on the 
variables and parameters, respectively. Three essential quad-equations 
\eqn{w112}--\eqn{w13} describe the reduced system.
For $\ohll u$, we shift \eqn{ff13} by $t_1t_2=t_{12}$; Equations \eqn {w12} and \eqn{w13}
 come from Equations \eqn{f12} and \eqn{f231}, respectively. The rest of 
 the equations of the reduced system \eqn{f} can be obtained from these three equations by
 shifts under the constraints given in Equations \eqn{conp}--\eqn{conpp}. 
  \end{proof}
The natural action of $\widetilde{W}\bigl(A_2^{(1)}\bigr)$ on the index set $S$ of $U_S$ (given by Equation \eqn{nataA2})
is extended to that on the variables $u_S$ by Equation \eqn{wufun}. The three equations in \eqn{actioneA2o1} are
associated with the three quadrilaterals around the fundamental simplex $F$ (see Figure \subref{cth}). These are used to obtain a representation
of ${W}\bigl(A_2^{(1)}\bigr)$ on the initial value set $\{u_0,u_1,u_{12}\}$, which are the
vertices of $F$, indicated by the rounded boxes in Figure \subref{a1a1}.
{\prop \label{saA2}
The reduced system \eqn{actioneA2o1} can be formulated as a birational
representation of $\widetilde{W}\bigl(A_2^{(1)}\bigr)=\langle s_0, s_1, s_2, \rho\rangle$
on the three initial values $\{u_0, u_{1}, u_{12}\}$ 
by defining the actions associated to the generators as follows,

\begin{align}\nonumber
&s_1:\{u_1\to u_2=\frac{u_{1}(a_1u_0+a_2u_{12})}{a_2u_0+a_1u_{12}},\quad a_1\lra a_2\}, \\\label{actioneA2o2}
&s_2:\{ u_{12}\to u_{13}=\frac{u_{12}(a_2u_1+a_3 u_{0})}{a_3 u_1+a_1u_{0}},\quad a_2\lra a_3\}, \\\nonumber
&s_0:\{u_0\to u_{112}=\frac{u_{0}(q a_1u_1+a_3 u_{12})}{a_3 u_1+q a_1u_{12}},\quad
a_3\to qa_1, \quad a_1\to a_3/q\},\\\nonumber
&\rho:\{u_0\to u_1,\quad u_1\to u_{12}, \quad u_{12}\to u_0,\quad a_1\to a_2,\quad a_2\to a_3, \quad
a_3\to q a_1\}.
\end{align}

The associated actions of $s_1, s_2, s_0, \rho$ defined as above satisfy the fundamental relations \eqn{funWaA2}
of $\widetilde{W}\bigl(A_2^{(1)}\bigr)$ (corresponding to the Dynkin diagram in Figure \ref{DaA21}).
We note that $\rho^3:\{a_1\to q a_1,\,a_2\to q a_2, \,
a_3\to q a_3\}$, but since the edge parameters $a_i$ of system \eqn{actioneA2o1}
can be always arranged as pairs of ratios and $\rho^3:\{\frac{a_i}{a_j}\to \frac{a_i}{a_j}\}$ for all $i, j\in \{1, 2, 3\}$,
the defining relation $\rho^3=1$ in \eqn{funWaA2} is satisfied.
}
\begin{proof} 
It can be verified directly that
 the transformations given in Equation \eqn{actioneA2o2} satisfy the relations in Equation \eqn{funWaA2}. That is the reduced System \eqn{actioneA2o1}
 has $\widetilde{W}\bigl(A_2^{(1)}\bigr)$ symmetry.
\end{proof}

{\Rem
The dynamics of system \eqn{actioneA2o1} is described by the translational elements
of $\widetilde{W}\bigl(A_2^{(1)}\bigr)=T_{ P} \rtimes W(A_2)=\langle t_{h_1},  t_{h_2}\rangle\rtimes W(A_2)$.
Recall that $t_1=t_{h_1}$,
where $h_1=U_1=(2\e_1-\e_2-\e_3)/3$ is a fundamental weight of $W(A_2)$, and $t_1$, $t_2$, $t_3$ defined in
Equation \eqn{TaA2} satisfy the condition \eqn{TaA2p}, which
is the ``periodic condition'' \eqn{conp}.
Furthermore, their actions on  $\frac{a_i}{a_j}$ for all $i, j\in \{1, 2, 3\}$, given by Equation \eqn{conpp},  satisfy the condition \eqn{TaA2p}.                                                                                                                                                                                                                                                                                                                                                                              
}

 {\theorem
Let 
\beqn\label{fg}
f=\frac{\ol u}{\ohl u},\;\; g=\frac{\ohl u}{u_0}.
\eeqn

System \eqn{actioneA2o1} is equivalent to 

\beqn\label{qPIII}
\ol g=\frac{(1+ft)}{fg(f+t)},\;\;
\ol f=\frac{(1+a\ol g t)}{f\ol g (\ol g+at)},\quad 
\eeqn
where $t=qa_1/a_3,\;\;a=a_3/a_2$, and $t_1(f)=\ol f$, $t_1(g)=\ol g$. 
System \eqn{qPIII} is a second-order nonlinear ordinary $q$-discrete equation
of \Pa type with $\widetilde{W}\bigl(A_2^{(1)}\bigr)$ symmetry.
}

\begin{proof}

We want to find $f$, $g$ shifted in the $t_1$
direction, that is:

\begin{subequations}\label{T1fg}
\begin{align}
 &t_1(f)=\ol f={\oll u}/{\ohll u},  \\
 &t_1(g)=\ol g=\,{\ohll u}/{\ol u}.
\end{align}
\end{subequations}
For $\oll u$, shift Equation \eqn{w12} by $t_1$, rearrange,  we have
\beqn\label{w11}
\oll u=\frac{a_2\ol u\ohl u+qa_1\ohl u\ohll u}{qa_1\ol u+a_2\ohll u}.
\eeqn
We
use Equation \eqn{w112} to express $u_{112}$ 
in terms of $\{u_0, u_{1}, u_{12}\}$, and finally in terms of
of $f$ and $g$. We have system \eqn{qPIII},
where $t$ is the independent variable and $a$ is a parameter of the equation. 
\end{proof}

System \eqn{qPIII} is a sub-case of the $q$-discrete \Pa equation \eqn{eqn:qp3} with $\widetilde{W}\bigl((A_2+A_1)^{(1)}\bigr)$ symmetry,
which has been obtained from a similar type of reduction of a system of quad-equations \cite{jns1}.
In Section \ref{ag}, we provide an algebro-geometric description of Equation \eqn{eqn:qp3}.

\subsection{A system of quad-equations with $W(B_2+A_1)$ symmetry}\label{DISB2A1}
In the previous section we have constructed system \eqn{f} of six l-mKdV equations on the vertices of a $3$-cube
with $W(B_3)$ symmetry. In \cite{Boll:11c}, it was shown that different quad-equations can be
defined consistently on a $3$-cube,
sometimes referred to as the system on an asymmetric $3$-cube.
{\prop [\cite{abs:03}]
Given a $H^4$ type:
\begin{subequations}\label{eqna1a2}
\begin{equation}\label{a23}
 Q(x_0, x_2, x_3, x_{23} ;\be,\ga)
=x_{2}x_{23}+x_0x_{3}-\frac{\ga}{\be}(x_{3}x_{23}+x_0x_{2})=0, 
\end{equation}
and a $H^6$ type:
\begin{equation} \label{a12}
 H(x_0, x_1, x_2, x_{12} ;\rho,\be)
=\be x_{2}x_{12}+\be x_0x_{1}+ \rho x_{0}x_{2}=0,
\end{equation}
\end{subequations}
quad-equations in the ABS classification \cite{abs:03},
the following system of quad-equations:
\noindent
\begin{subequations}\label{af}
\begin{center}

\begin{minipage}[t]{.45\textwidth}
\begin{align}\label{af23}
\frac{x_{3}}{x_2}&=\frac{\ga x_0-\be x_{23}}{\be x_0-\ga x_{23}},  \\\label{af12}
\frac{x_1}{x_2}&=-\frac{x_{12}}{x_0}-\frac{\rho}{\be}, \\\label{af123}
\frac{x_{13}}{x_{23}}&=-\frac{x_{123}}{x_3}-\frac{\rho}{\be},
\end{align}
\end{minipage}
\begin{minipage}[t]{.45\textwidth}
\begin{align}\label{af231}
\frac{x_{13}}{x_{12}}&=\frac{\ga x_1-\be x_{123}}{\be x_1-\ga x_{123}}, \\\label{af13}
\frac{x_1}{x_3}&=-\frac{x_{13}}{x_0}-\frac{\rho}{\ga},\\\label{af132}
\frac{x_{12}}{x_{23}}&=-\frac{x_{123}}{x_2}-\frac{\rho}{\ga},
\end{align}
\end{minipage}
\end{center}
\end{subequations}
is consistent on a $3$-cube. Equations \eqn{af23} and \eqn{af231}) are of $H^4$ type, 
and Equations \eqn{af12}, \eqn{af123}, \eqn{af13},
and \eqn{af132} are of $H^6$ type. 
}
{\Rem
System \eqn{af} is equivalent to Equations (3.29-3.30) given in \cite{Boll:11c}  
with $\de_2=\de_3=0, \;\de_1=\rho$, and $x_0\leftrightarrow x_1,
 x_2\leftrightarrow x_{12},
 x_3\leftrightarrow x_{13},
  x_{23}\leftrightarrow x_{123},
    \be\rightarrow 1/\be,
\ga\rightarrow 1/\ga$. System \eqn{af} no longer have $W(B_3)$ symmetry of the $3$-cube. 
In fact we will prove that it has
$W(B_2+A_1)$ symmetry. 
}

An asymmetric cube (drawn by solid black lines in Figure \subref{a3cube}) can be constructed by considering its set of vertices $X_S$ ($S\subseteq\{1, ..., 3\}$) associated
with a set of edges of two adjacent
cubes (drawn by dashed lines in blue).
The symmetry group of such an asymmetric $3$-cube is $W(B_2+A_1)=\langle  s_2, r_{3},w_0\rangle$, where the
generators satisfy the following relations:
\beqn\label{actB2A1}
s_3^2=s_2^2=w_0^2=1,\quad
(s_2s_3)^4=1,\quad (s_2w_0)^2=(s_3w_0)^2=1.
\eeqn
The corresponding Dynkin diagram is given in Figure \subref{DB2A1}.

The natural actions of $s_2$, $s_3$ and $w_0$ on $X_S$ are given by the following transformations:
\begin{align}\nonumber
&s_2:\{X_2\lra X_3,\quad X_{12}\lra X_{13}\},\\\label{actac}
&s_3:\{X_0\lra X_3,\quad X_{1}\lra X_{13},\quad X_{2}\lra X_{23},\quad X_{12}\lra X_{123}\},\\\nonumber
& w_0:\{X_0\to 1/X_1,\quad X_1\to 1/X_0,
  \quad X_2\to 1/X_{12}, \quad X_{12}\to 1/X_{2},
   \quad X_3\to 1/X_{13},  \\\nonumber
 & \hs{1cm}  X_{13}\to 1/X_{3},
    \quad X_{23}\to 1/X_{123},\quad X_{123}\to 1/X_{23}\}.
\end{align}
The actions of $s_2$ and $s_3$ above are the same as those of $W(B_3)$ given in Equation \eqn{natB3}. The reflection plane associated with 
the generator $w_0$
is shown as the striped plane in Figure \subref{a3cube}.
It can be easily checked that the transformations in \eqn{actac} satisfy the fundamental relations \eqn{actB2A1} of $W(B_2+A_1)$.

\begin{figure}\label{Fb2a1}
\begin{subfigure}[b]{.5\textwidth}
\centering
\resizebox{\linewidth}{!}{
\begin{tikzpicture}[scale=0.4]		
%
\draw[dashed, blue, very thick] (-2.5,0) -- (-2.5,5);
\draw[blue, very thick] (2.5,0) -- (2.5,5);
\draw[dashed, blue, very thick] (7.5,0) -- (7.5,5);

\draw[dashed, blue, very thick] (-0.5,2)--(-0.5,7);
\draw[dashed, blue, very thick] (4.5,2)--(4.5,7);
\draw[dashed, blue, very thick] (9.5,2)--(9.5,7);

\draw[dashed, blue, very thick] (-2.5,0) --(-0.5,2);
\draw[dashed, blue, very thick] (2.5,0) --(4.5,2);
\draw[dashed, blue, very thick] (7.5,0) --(9.5,2);

\draw[dashed, blue, very thick] (-2.5,5) --(-0.5,7);
\draw[blue, very thick] (2.5,5) --(4.5,7);
\draw[dashed, blue, very thick] (7.5,5) --(9.5,7);

\draw[dashed, blue, very thick] (-2.5,0) -- (7.5,0);
\draw[dashed, blue, very thick] (-0.5,2)--(9.5,2);
\draw[dashed, blue, very thick] (-2.5,5)--(7.5,5);
\draw[dashed, blue, very thick] (-0.5,7)--(9.5,7);

\filldraw[blue!50, opacity=0.15] (-0.5,7)--(9.5,7)--(7.5,5)--(-2.5,5)--(-0.5,7);
\draw [black] (0,0) node [anchor=north east] {$x_3$} ;
\filldraw [black]  (0,5) node [anchor=north east] {$x_{23}$ } circle(1ex);
\draw [black] (5,0) node [anchor=north west] {$x_{13}$ } ;
\draw [black](5,5) node [anchor=north west] {$x_{123}$ } ;

\filldraw [black]  (2,2) node [anchor=north east] {$x_0$} circle(1ex);
\filldraw [black] (2,7) node [anchor=south east] {$x_2$} circle(1ex);
\draw [black](7,2) node [anchor=north] {$x_1$} ;
\filldraw [black](7,7) node [anchor=south west] {$x_{12}$ } circle(1ex);

\draw[very thick, loosely dotted] (2,2)--(5,5);
\foreach \s in {2.25}{
\filldraw [black]  (-1+\s,-2) node [anchor=east] {$w_0$ } circle(.1ex);
\filldraw [gray]  (-1+\s,7) node [anchor=east] {${}$ } circle(.1ex);
\filldraw [gray]  (3+\s,2) node [anchor=east] {${}$ } circle(.1ex);
\filldraw [black]  (3+\s,11) node [anchor=east] {$w_0$ } circle(.1ex);
\draw[pattern=horizontal lines, opacity=0.4] (-1+\s,-2)-- (-1+\s,7)--(3+\s,11)--(3+\s,2)--(-1+\s,-2);}
\draw [thick, black] (0,0) -- (0,5);
\draw [thick, black] (0,0) -- (5,0);
\draw [thick, black] (5,0) -- (5,5);
\draw [thick, black] (0,5) -- (5,5);

\draw [thick, black] (5,0) -- (7,2);
\draw [thick, black] (5,5) -- (7,7);
\draw [thick, black] (0,5) -- (2,7);
\draw [thick, black] (2,7) -- (7,7);
\draw [thick, black] (7,2) -- (7,7);

\draw [thick, black](0,0) -- (2,2);
\draw [thick, black](2,2) -- (2,7);
\draw [thick, black] (2,2) -- (7,2);
\draw [densely dotted] (5,0) -- (0,5);
\draw [densely dotted](5,0) -- (2,2);
\draw [densely dotted](5,0) -- (7,7);
\draw [densely dotted](0,5) -- (7,7);
\draw [densely dotted](0,5) -- (2,2);
\end{tikzpicture}
}
\subcaption{}\label{a3cube}
\end{subfigure}
\begin{subfigure}[b]{0.8\textwidth}
\centering
 \resizebox{\linewidth}{!}{
\begin{tikzpicture}[scale=1.4]


               \foreach \s in {2,...,5} \foreach \t in {-3,-1, 1}
		{

\draw[dashed, gray] (0.5+\s,{(\t+2)*sqrt(3)/2}) -- (0+\s, {(\t+3)*sqrt(3)/2});
\draw[dashed, gray] (-0.5+\s,{(\t+2)*sqrt(3)/2}) -- (0+\s, {(\t+3)*sqrt(3)/2});
\draw[dashed, gray] (-0.5+\s,{(\t+2)*sqrt(3)/2}) -- (0.5+\s, {(\t+2)*sqrt(3)/2});
                }      
                   \foreach \s in {2,...,6} \foreach \t in {-2,0}
		{

\draw[dashed, gray] (0+\s,{(\t+2)*sqrt(3)/2}) -- (-0.5+\s, {(\t+3)*sqrt(3)/2});
\draw[dashed, gray] (-1+\s,{(\t+2)*sqrt(3)/2}) -- (-0.5+\s, {(\t+3)*sqrt(3)/2});
\draw[dashed, gray] (-1+\s,{(\t+2)*sqrt(3)/2}) -- (0+\s, {(\t+2)*sqrt(3)/2});
                }  
            
\draw[dashed, gray] (5.5,{0.5*sqrt(3)})--(6,{sqrt(3)});
\draw[dashed, gray] (5.5,{-0.5*sqrt(3)})--(6,0);
\draw[dashed, gray] (1,0)--(1.5,-{0.5*sqrt(3)});
\draw[dashed, gray] (1,{sqrt(3)})--(1.5,{0.5*sqrt(3)});
\draw[dashed, gray] (1,{2*sqrt(3)})--(1.5,{1.5*sqrt(3)});
 \draw[dashed, gray] (1.5,{2*sqrt(3)})--(5.5,{2*sqrt(3)});
 
 
                \foreach \s in {2,...,5} \foreach \t in {-3,-1,1}
		{

\draw[thick, black] (1+\s,{(\t+2)*sqrt(3)/2}) -- (0.5+\s, {(\t+3)*sqrt(3)/2});
\draw[thick, black] (0+\s,{(\t+2)*sqrt(3)/2}) -- (0.5+\s, {(\t+3)*sqrt(3)/2});
\draw[thick, black] (0+\s,{(\t+2)*sqrt(3)/2}) -- (1+\s, {(\t+2)*sqrt(3)/2});
                }      
                   \foreach \s in {2,...,5} \foreach \t in {-2,0}
		{

\draw[thick, black] (0.5+\s,{(\t+2)*sqrt(3)/2}) -- (0+\s, {(\t+3)*sqrt(3)/2});
\draw[thick, black] (-0.5+\s,{(\t+2)*sqrt(3)/2}) -- (0+\s, {(\t+3)*sqrt(3)/2});
\draw[thick, black] (-0.5+\s,{(\t+2)*sqrt(3)/2}) -- (0.5+\s, {(\t+2)*sqrt(3)/2});
                }  
            
\draw[thick, black] (5.5,{sqrt(3)})--(6,{1.5*sqrt(3)});
\draw[thick, black] (5.5,0)--(6,{0.5*sqrt(3)});

\draw[thick, black] (1.5,0)--(2,-{0.5*sqrt(3)});
\draw[thick, black] (1.5,{sqrt(3)})--(2,{0.5*sqrt(3)});
\draw[thick, black] (1.5,{2*sqrt(3)})--(2,{1.5*sqrt(3)});
 \draw[thick, black] (1.5,{2*sqrt(3)})--(5.5,{2*sqrt(3)});
 
\foreach \s in {0.5} 
{
\draw[thick] (3+\s, 0) -- (2.5+\s, {0.5*sqrt(3)}) ; 
\draw[thick] (2.5+\s, {0.5*sqrt(3)}) -- (3+\s, {sqrt(3)})  ;
\draw[very thick] (3+\s, {sqrt(3)}) -- (4+\s, {sqrt(3)})  ;
\draw[thick] (4+\s, {sqrt(3)}) -- (4.5+\s, {0.5*sqrt(3)});
\draw[thick] (4.5+\s, {0.5*sqrt(3)}) -- (4+\s, 0)  ;
\draw[thick] (4+\s, 0) -- (3+\s, 0)  ;
\draw[thick]   (3+\s, {sqrt(3)})--(4+\s, 0);
\draw[thick]    (3+\s, 0)-- (4+\s, {sqrt(3)});

}

\draw[thick] (3.5, {-0.8*sqrt(3)})--(3.5, {2.2*sqrt(3)});
\draw[thick] (4, {-0.8*sqrt(3)})--(4, {2.2*sqrt(3)});
\draw[thick] (0.5, {0.5*sqrt(3)})--(6.5, {0.5*sqrt(3)});
\draw[thick] (0.5, {sqrt(3)})--(6.5, {sqrt(3)});

\draw[blue, dashed, very thick] (3, 0) -- (2.5, {0.5*sqrt(3)}) ; 
\draw[blue, very thick] (4, 0) -- (3.5, {0.5*sqrt(3)}) ; 
\draw[blue, dashed, very thick] (5, 0) -- (4.5, {0.5*sqrt(3)}) ; 
\draw[blue, dashed, very thick] (2.5, {0.5*sqrt(3)}) -- (3, {sqrt(3)})  ;
\draw[blue, very thick] (3.5, {0.5*sqrt(3)}) -- (4, {sqrt(3)})  ;
\draw[blue, dashed, very thick] (4.5, {0.5*sqrt(3)}) -- (5, {sqrt(3)})  ;
\draw[blue, dashed, very thick] (3, {sqrt(3)}) -- (4, {sqrt(3)})  ;
\draw[blue, dashed, very thick] (2.5, {0.5*sqrt(3)}) -- (5.5, {0.5*sqrt(3)})  ;
\draw[blue, dashed, very thick](4, {sqrt(3)}) -- (5, {sqrt(3)})  ;
\draw[blue, dashed, very thick] (5, {sqrt(3)}) -- (5.5, {0.5*sqrt(3)});
\draw[blue, dashed, very thick] (4, {sqrt(3)}) -- (4.5, {0.5*sqrt(3)});
\draw[blue, dashed, very thick] (3, {sqrt(3)}) -- (3.5, {0.5*sqrt(3)});
\draw[blue, dashed, very thick] (5.5, {0.5*sqrt(3)}) -- (5, 0)  ;
\draw[blue, dashed, very thick] (4.5, {0.5*sqrt(3)}) -- (4, 0)  ;
\draw[blue, dashed, very thick] (3.5, {0.5*sqrt(3)}) -- (3, 0)  ;
\draw[blue, dashed, very thick] (5, 0) -- (3, 0)  ;
\draw[blue, dashed, very thick] (3.5, {0.5*sqrt(3)}) -- (4, {sqrt(3)})  ;
\foreach \s in {0.5} 
{

\filldraw [black]  (3.5+\s, {0.5*sqrt(3)}) node   [mybox] (box) {\bf $x_0$}circle(0.1ex);
\filldraw [black]  (4.5+\s, {0.5*sqrt(3)}) node  {\colorbox{white}{$x_1$}} circle(0.1ex);
\filldraw [black]  (3+\s, {sqrt(3)}) node   [mybox] (box) {\bf $x_2$} circle(0.1ex);
\filldraw [black]  (3+\s, 0) node  {\colorbox{white}{$x_3$}} circle(0.1ex);
\filldraw [black]  (2.5+\s, {0.5*sqrt(3)}) node   [mybox] (box) {\bf $x_{23}$} circle(0.1ex);
\filldraw [black]  (4+\s, {sqrt(3)}) node  {\colorbox{white}{$x_{12}$}} circle(0.1ex);
\filldraw [black]  (4+\s, 0) node  {\colorbox{white}{$x_{13}$}} circle(0.1ex);
\filldraw [black]  (2.5+\s, {1.5*sqrt(3)}) node  {\colorbox{white}{$x_{22}$}} circle(0.1ex);
\filldraw [black]  (3.5+\s, {1.5*sqrt(3)}) node  {\colorbox{white}{$x_{122}$}} circle(0.1ex);
\filldraw [black]  (2.5, {1*sqrt(3)}) node  {\colorbox{white}{$x_{223}$}} circle(0.1ex);
}
\draw (3.5, {-0.6*sqrt(3)}) node [rotate=90] {$\curvearrowupdown$};
\draw (4, {-0.6*sqrt(3)}) node [rotate=90] {$\curvearrowupdown$};

\draw (0.8, {0.5*sqrt(3)}) node {$\curvearrowupdown$};
\draw (0.8, {sqrt(3)}) node {$\curvearrowupdown$};

\draw[blue, very thick] (3.65, {0.85*sqrt(3)}) node [rotate=25] { $\curvearrowleft$};
\draw[blue, very thick] (3.85, {0.65*sqrt(3)}) node [rotate=-155] {$\curvearrowleft$};
\draw (3.5, {-0.9*sqrt(3)}) node [rotate=90] {$w_1$};
\draw (4, {-0.9*sqrt(3)}) node [rotate=90] {$w_0$};

\draw (0.3, {0.5*sqrt(3)}) node {$s_2$};
\draw (0.3, {sqrt(3)}) node {$s_0$};
\draw (3.75, {0.75*sqrt(3)}) node 
{\colorbox{white}{$\pi$}} circle(0.1ex);
\draw[blue] (5, {sqrt(3)}) --  (2.5, {0.5*sqrt(3)});
\filldraw[blue!50, opacity=0.2] (3, {sqrt(3)}) -- (2.5, {0.5*sqrt(3)}) -- (4.5, {0.5*sqrt(3)})--(5, {sqrt(3)}) -- (3, {sqrt(3)});
\end{tikzpicture}

}
\subcaption{}\label{a1a1}
\end{subfigure}
\end{figure}

Define the function $x:  X_S \to \mb{C}$ on the vertices of the asymmetric cube,
we write
\begin{align}
 &x_S=x(X_S), \label{xfuna} 
\end{align}
and the elements $w\in W(B_2+A_1)$ act on the variables $x_S$ by
\begin{align}
 &w(x_S)=x(w.X_S)=x(X_{w.S})=x_{w.S}\label{wxfun1}.
\end{align}

As in the case of system \eqn{f} on the symmetric $3$-cube, the asymmetric system \eqn{af}  can be constructed from
four initial values. Here we have taken the initial value set to be $\{x_{0},x_{2},x_{12},x_{23}\}$ (indicated by the black nodes in Figure \subref{a3cube})
and use the equations in \eqn{af} to
express $x_1, x_3, x_{13}, x_{123}$ in terms of the initial values as follows.
For $x_1$ and $x_3$ use Equations \eqn{af12} and \eqn{af23}, respectively. For $x_{13}$, we use what is called the ``tetrahedron
equation'' (see the tetrahedron outlined by the densely dotted lines in Figure \subref{a3cube}),
\begin{align}\nonumber
 &T_{H}(x_0, x_{12}, x_{13}, x_{23}; \rho, \beta, \ga)  \\\label{TH}
 =&\beta ^2\ga ( x_0 x_{12}+  x_{13} x_{23})-\beta  \gamma ^2
   (x_0 x_{13}+x_{12} x_{23})-(\be^2-\gamma ^2) \rho x_0 x_{23}.
\end{align}

Equation \eqn{TH} is a consequence of the Equations in system \eqn{af}. It can be obtained by first getting the expressions for
$\frac{x_1}{x_3}$, $\frac{x_2}{x_1}$ and $\frac{x_3}{x_2}$ using Equations \eqn{af13}, \eqn{af12} and \eqn{af23},
respectively; then relation $\frac{x_1}{x_3}\frac{x_2}{x_1}\frac{x_3}{x_2}=1$ implies Equation \eqn{TH}.
Finally, for $x_{123}$ we use Equation \eqn{af132}.


The natural action of $W(B_2+A_1)$  on the index set $S$ of $X_S$ given in Equation \eqn{actac}
is extended to that on the variables $x_S$ by Equation \eqn{wxfun1}.
System \eqn {af} is reformulated as a birational representation
of $W(B_2+A_1)$ on the initial value set $\{x_{0},x_{2},x_{12},x_{23}\}$ by the following proposition.

{\prop \label{B2A1eqn} 
The actions on the initial value set $\{x_{0},x_{1},x_{12},x_{23}\}$ to be associated with the generators
of $W(B_2+A_1)$ are defined as follows,
\bee\nonumber
&&s_2:\{ x_2 \to \frac{x_2(x_0\ga-\be x_{23})}{x_0\be-\ga x_{23}},\\\nonumber
&&\hs{5mm} x_{12}\to \frac{-x_0\be^2\ga x_{12}+\be\ga^2x_{12}x_{23}-x_0x_{23}(\be^2-\ga^2)\rho}{\be\ga(-x_0\ga+\be x_{23})},
\be\leftrightarrow\ga\},\\\label{acta2a1}
&&s_3:\{x_0 \to \frac{x_2(x_0\ga-\be x_{23})}{x_0\be-\ga x_{23}}, \quad x_2 \leftrightarrow x_{23},\quad
x_{12}\to -\frac{x_2 x_{12}+x_2 x_{23}\rho/\ga}{x_{23}}\},\\\nonumber
&&w_0:\{x_0 \to -\frac{-\be x_0}{x_2(\be x_{12}+\rho x_0)}, \quad x_2 \to 1/x_{12},\quad 
 x_{23} \to -\frac{x_{23}}{x_2x_{12}+x_2x_{23}\rho/\ga}, \\\nonumber
&&\hspace{2cm} x_{12}\to1/x_2\}.
\ee
The transformations \eqn{acta2a1} satisfy the fundamental relations in \eqn{actB2A1} for $W(B_2+A_1)$.}

\begin{proof}
It can be checked by direct computation that the transformations satisfy the fundamental relations \eqn{actB2A1} of $W(B_2+A_1)$. 
System \eqn{af} of quad-equations can be generated from the two basic equations \eqn{a23}--\eqn{a12}
by the actions of the group elements of $W(B_2+A_1)$.
Equations \eqn{af23} and \eqn{af231}  can be
obtained from Equation \eqn{a23} by applying the group elements $\{1, w_0\}$, respectively; Equations \eqn{af12},
\eqn{af123}, \eqn{af13} and \eqn{af132} can be
obtained from Equation \eqn{a12} by applying the group elements $1, s_3, s_2$ and $s_2s_3$, respectively.
\end{proof}
Now we show that system \eqn{af} is related by reduction to a discrete \Pa equations with $\widetilde{W}\bigl((A_1+A'_1)^{(1)}\bigr)$ symmetry in Sakai's classification.
\subsection{Reduction to a
 $q$-discrete \Pa equation with $\widetilde{W}\bigl((A_1+A'_1)^{(1)}\bigr)$ symmetry}\label{DISA1A1}

On system \eqn{af} letting 
\beqn\label{agr}
x_{123}=x_0
\eeqn impose the following conditions on the variables and parameters,
\beqn\label{conpa}
t_{1}t_{2}t_{3}(x_S)=x_{S\cup \{1, 2, 3\}}=x_{S},
\quad \rho_1=q\rho, \quad \be_2=q\be,\quad \ga_3=q\ga,\quad q\in\mathbb{C}^\ast,
\eeqn
where $t_{j}(x_S)=x_{S\cup \{j\}}$ denotes the shift in direction $j$ on the $\widetilde{W}\bigl((A_1+A'_1)^{(1)}\bigr)$ lattice.

In section \ref{DISA2}, reduction condition \eqn{pc} on system \eqn{f}  of six same quad-equations corresponds to the orthogonal projection of a symmetric $3$-cube with $W(B_3)$ symmetry to a triangular lattice with $\widetilde{W}\bigl(A_2^{(1)}\bigr)$ symmetry (see Figures \subref{3cube} and \subref{cth}). Here for the asymmetric system \eqn {af}, reduction condition \eqn{agr} 
corresponds to the projection (indicated by the
loosely dotted line in Figures \subref{a3cube}) of an asymmetric $3$-cube with $W(B_2+A_1)$ symmetry to a rectangular lattice 
(see Figure \subref{a1a1}) with 
$\widetilde{W}\bigl((A_1+A'_1)^{(1)}\bigr)$ symmetry\footnote{The $'$ on $A'_1$ denotes the fact that these are two root systems of type $A_1$ with different root (weight) lengths. }.
 The extended affine Weyl group $\widetilde{W}\bigl((A_1+A'_1)^{(1)}\bigr)$ is generated by the two
sets of generators $\{s_2, s_0\}$ and $\{w_1, w_0\}$ that commute (realised as
two sets of orthogonal affine reflection lines); and $\pi$ the Dynkin diagram automorphism (realised as
rotation by $180^{\circ}$ about the main diagonal of the two adjacent quadrilaterals shaded in blue).
In other words, the generators satisfy
the following relations:
\bee\nonumber
&&s_0^2=s_2^2=(s_0s_2)^\infty=1, \quad w_0^2=w_1^2=(w_0w_2)^\infty=1,\\\label{acta1a1f}
&&(s_iw_j)^2=1, \quad i=0, 2,\; j=0, 1, \\\nonumber
&&\pi^2=1, \quad\pi w_0=w_1\pi,\quad \pi s_2=s_0\pi.
\ee
We note that the relation $(w w')^\infty=1$ means that
there is no positive integer $N$ such that $(w w')^N=1$
for transformations $w$ and $w'$.
The corresponding Dynkin diagram is given in Figure \subref{DA1A1}. The natural action of $w\in \widetilde{W}\bigl((A_1+A'_1)^{(1)}\bigr)$ on $X_S$, points of the rectangular $\widetilde{W}\bigl((A_1+A'_1)^{(1)}\bigr)$ lattice,  is extended to that on the variables $x_S$ as defined in Equation \eqn{wxfun1}. 

The first thing to notice is that the number of initial values of the reduced system becomes three.
We choose $\{x_0, x_2, x_{23}\}$, and express $x_{12}$ in terms of these using 
Equation \eqn{af132} with the reduction condition $x_{123}=x_0$. In fact it is easy to see that any $x$ variable on the lattice in Figure \subref{a1a1}
can be reach from  $\{x_0, x_2, x_{23}\}$ (indicated by the rounded boxes) by using the reflections $s_2, s_0, w_1, w_0$
and rotation $\pi$.
System \eqn{af} with constraint \eqn{conpa} can then be rewritten as
a birational representation of $\widetilde{W}\bigl((A_1+A'_1)^{(1)}\bigr)=\langle s_2, s_0, w_1, w_0, \pi\rangle$
on the initial value set $\{x_0, x_2, x_{23}\}$.
{\prop
 The actions to be associated with the generators $s_2, s_0, w_1, w_0$ on the initial value set $\{x_0, x_2, x_{23}\}$ are given by,
\bee\nonumber
&&\hs{-5mm}s_2:\{ x_2 \to \frac{x_2(x_0\ga-\be x_{23})}{x_0\be-\ga x_{23}}\},\\\nonumber
&&\hs{-5mm}s_0:\{x_0 \to \frac{x_0(\ga^2x_2^2+q\be\ga x_0x_{23}+q\be\rho x_2x_{23})}
{\ga(q \be x_2^2+x_{23}(\ga x_0+\rho x_2)},\,\\\nonumber
&&\hs{38mm} {x_{23} \to \frac{q\be x_{23}(q \be x_2^2+x_{23}(\ga x_0+\rho x_2))}{q\be\ga x_2^2+q^2\be^2 x_0x_{23}+\ga\rho x_2x_{23}}\},}\\\label{acta1a1v}
&&\\\nonumber
&&\hs{-5mm}w_0: \{x_0\to 1/x_0, \quad x_2 \to -\frac{\ga x_2}{x_{23}(\ga x_0+\rho x_2)},\\\nonumber
&&\hs{40mm}x_{23}\to \frac{\be\ga x_0}{\be \ga x_0x_{23}-\ga\rho x_0x_2+\be\rho x_2x_{23}}\},\\\nonumber
&&\hs{-5mm}w_1:\{x_0 \leftrightarrow 1/x_{23}, \quad  x_2 \to1/x_2\},\\\nonumber
&&\hs{-5mm}\pi:\{x_0\to 1/x_2, \quad x_2\to 1/x_0, \quad x_{23}\to -\frac{\ga x_2}{x_{23}(\ga x_0+\rho x_2)}\},
\ee
and the actions on the parameters are given by
\bee\nonumber
&&s_2:\{\be, \ga\}\mapsto \{\ga, \be\},\\\nonumber
&&s_0:\{\be, \ga\}\mapsto \{\ga/q, q \be\},\\\label{acta1a1p}
&&w_0:\{\be, \ga, q\}\mapsto \{\ga, \be, 1/q\},\\\nonumber
&&w_1:\{\be,\ga,\rho, q\}\mapsto \{\ga, \be,\rho/q, 1/q\},\\\nonumber
&&\pi:\{\be, q\}\mapsto \{q\be, 1/q\}.
\ee
The transformations given in Equations \eqn{acta1a1v}--\eqn{acta1a1p} satisfy the fundamental relations of 
$\widetilde{W}\bigl((A_1+A'_1)^{(1)}\bigr)$ in Equation \eqn{acta1a1f}. 
In particular, the translational elements $t_i$ ($i=1, 2, 3$) are defined by
\beqn\label{trana1a1}
t_1=w_0w_1, \quad t_2=\pi s_1w_0, \quad t_3=\pi s_0 w_0.
\eeqn
We note that $t_1t_2t_3:\{\rho\to q \rho,\,\be\to q \be, \,
\ga\to q \ga\}$. However, since the parameters in system \eqn{acta1a1v}
can always be arranged as pairs of ratios so that 
the condition on the translational elements, $t_1t_2t_3=1$, is satisfied. 
This in fact corresponds to our reduction condition \eqn{agr}.

\begin{figure}
\begin{subfigure}[b]{.3\textwidth}

\centering
 \resizebox{\linewidth}{!}{
\begin{tikzpicture}[scale=1]
\node  (a1) {$\circ$};
\node [right=of a1](a2) {$\circ$} ;
\node [right=of a2](a3) {$\circ$};
\draw (a1) node [anchor=north] {$w_0$} ;
\draw (a2) node [anchor=north] {$s_2$} ;
\draw (a3) node [anchor=north] {$s_3$} ;
\draw[double] (a2) -- node {} (a3);
\path[use as bounding box] (-1.5,0) rectangle (0,0);
\end{tikzpicture}
}
\subcaption{$\Ga(B_2+A_1)$}\label{DB2A1}
\end{subfigure}
\begin{subfigure}[b]{.5\textwidth}
\centering
 \resizebox{\linewidth}{!}{
\begin{tikzpicture}[scale=1]
\node  (a1) {$\circ$};
\node [right=of a1](a2) {} ;
\node [right=of a2](a3) {$\circ$} ;
\draw (a1) node [anchor=north] {$s_0$} ;
\draw (a2) node [anchor=north] {$\infty$} ;

\draw (a3) node [anchor=north] {$s_2$} ;
\draw[-] (a1) -- node {} (a3);
\node  (a4) [right=of a3]{$\circ$};
\node [right=of a4](a5) {} ;
\node [right=of a5](a6) {$\circ$} ;
\draw (a4) node [anchor=north] {$w_1$} ;
\draw (a5) node [anchor=north] {$\infty$} ;
\draw (a6) node [anchor=north] {$w_0$} ;
\node (r1) at ($(a1)!0.5!(a3)$) {};
\draw (r1) node [anchor=south] {$\pi$} ;
\node (r2) at ($(a4)!0.5!(a6)$) {};
\draw (r2) node [anchor=south] {$\pi$} ;
\draw (5.45, 0.5) node  [rotate=-90] {$\curvearrowupdown$};
\draw (1.35, 0.5) node  [rotate=-90] {$\curvearrowupdown$};
\draw[-] (a4) -- node {} (a6);
\path[use as bounding box] (-1.5,0) rectangle (0,0);
\end{tikzpicture}
}
\subcaption{$\widetilde{\Ga}^{(1)}(A_1+A_1')$}\label{DA1A1}
\end{subfigure}
\end{figure}
}
\begin{proof}
The relations between the $x$ variables on any quadrilateral of the $\widetilde{W}\bigl((A_1+A'_1)^{(1)}\bigr)$
 lattice in Figure \subref{a1a1} can
be expressed in terms of Equations \eqn{af23},  \eqn{af12}, \eqn{af132}, and their shifted versions under
 the constraint \eqn{conpa}. The
action of generator $s_0$ is $\{x_0 \to x_{122},\, x_{23} \to x_{22}\}$. We can express:
$x_{122}$ in terms of the initial values by shifting Equation \eqn{af23} by $t_{1}t_{2}$,
$x_{223}$ by shifting Equation \eqn{af12} by $t_{2}t_{3}$,
and finally $x_{22}$ by shifting Equation \eqn{af132} by $t_{2}$.
The action of generator $w_0$ is: $\{x_0 \to 1/x_0, x_2 \to 1/x_{12}, x_{23} \to 1/x_{1}\}$, which again can be
expressed in terms of the initial values.
\end{proof}
The dynamics of system defined by transformations \eqn{acta1a1v}--\eqn{acta1a1p} are described by the translational elements
of $\widetilde{W}\bigl((A_1+A'_1)^{(1)}\bigr)$. We show in the following proposition that it is equivalent to a $q$-discrete \Pa equation.


{\prop
Let
\beqn\label{QPA11}
f=\frac{x_0}{x_2},\quad g=\frac{x_2}{x_{23}},
\eeqn
then the reduced system \eqn{af} with constraint \eqn{conpa} can be rewritten as
\beqn\label{QPA11e}
f_1=-\frac{1}{f}(1+\frac{\rho}{\be g_1}),\quad g_1=-\frac{1}{g}(1+\frac{\rho}{\ga f}),
\eeqn
where $t_1=w_0w_1$ describes right horizontal translation on the $\widetilde{W}\bigl((A_1+A'_1)^{(1)}\bigr)$ lattice, and 
$t_1(\rho)=\rho_1=q\rho$, $t_1(f)=\ol f$, and $t_1(g)=\ol g$.
System \eqn{QPA11e} is a second-order nonlinear ordinary $q$-discrete
equation of \Pa type with  $\widetilde{W}\bigl((A_1+A'_1)^{(1)}\bigr)$ symmetry, first obtained in \cite{KTGR:00}.
}
\begin{proof}
System \eqn{QPA11e} can be obtained by applying the action of $t_1=w_0w_1$ on $f$ and $g$ using  
the transformations  given in Equations \eqn{acta1a1v}--\eqn{acta1a1p}. Alternatively they can be obtained
by rewriting Equation (\ref{af12}) and Equation (\ref{af132}) with constraint \eqn{conpa}.
\end{proof}
{\Rem
System \eqn{af} is related to the discrete \Pa equation with
$\widetilde{W}\bigl((A_1+A'_1)^{(1)}\bigr)$ symmetry in Lemma 4.2 of \cite{jns1} by a gauge transformation.
}
\section{An algebro-geometric approach}\label{ag}
In this section, we explain in detail 
how to construct a representation of the extended affine Weyl group $W\bigl(A^{(1)}\bigr)$ associated with the Cartan
matrix $A^{(1)}$ on the field of rational functions
and thereby derive discrete \Pa equations as representations of $\widetilde{W}(A^{(1)})$
on the spaces of point configurations in projective space. In particular, we shown how discrete \Pa equations arise
as Cremona transformations from the  module \eqn{eqn:module10}.

In Section \ref{DIS}, we interpreted the dynamics of discrete integrable systems as translations on the weight
lattice by the actions of $W\bigl(A^{(1)}\bigr)$ on
a hyperplane $H\subset V^{(1)\ast}$. In this section, we look at the linear actions (in the sense explained
in Section \ref{twopic})
on the set of simple affine roots $\{ \al_0, ..., \al_n\}$ in $V^{(1)}$. 
To emphasise the fact that transformations are associated with the
linear actions of $\widetilde{W}(A^{(1)})$  on $V^{(1)}$ we use the following convention
\beqn\label{convd}
f.w=w^{-1}.f, \quad \mbox{for}\quad f\in V^{(1)}.
\eeqn
That is, we use the right action to indicate the fact that we are acting on $V^{(1)}$, whereas
left actions are associated with transformations on the affine hyperplane $H\subset  V^{(1)\ast}$.

The action of the translational elements on $V^{(1)}$, which we named ``shifted motion'',
is given by Equations \eqn{Transl}, satisfying
condition \eqn{Tranc}. This shifted motion on the simple affine roots is associated with the transformations of 
parameters of the discrete Painlev\'e equations \cite{sak:01}. For more background on the formulation of the discrete \Pa equations as
birational representations of the Weyl groups, in particular on the
discussions of the translational elements of the affine Weyl groups associated
with the discrete \Pa equations see \cite{NY:98, KMNOY:06, sak:01}.
Here we give the a $(A_2+A_1)^{(1)}$-type discrete \Pa equation as an example. 

Let us consider the following module over $\mathbb{Z}$:
\begin{equation}\label{eqn:module10}
 \bigoplus_{i=1}^2\mathbb{Z}H_i\bigoplus_{i=1}^8\mathbb{Z}E_i,
\end{equation}
where $\{H_1,H_2,E_1,\dots,E_8\}$ is a basis of the module, equipped with the symmetric bilinear form $(\,|\,)$ given by
\begin{equation}
 (H_i|H_j)=1-\delta_{ij},\quad
 (H_i|E_j)=0,\quad
 (E_i|E_j)=-\delta_{ij}.
\end{equation}
We define the root system of type $(A_2+A_1)^{(1)}$ by
\begin{equation}\label{eqn:root_A2A1}
 Q((A_2+A_1)^{(1)})
 =\mathbb{Z}\alpha_0\bigoplus\mathbb{Z}\alpha_1\bigoplus\mathbb{Z}\alpha_2
 \bigoplus\mathbb{Z}\beta_0\bigoplus\mathbb{Z}\beta_1,
\end{equation}
where
\begin{subequations}
\begin{align}
 &\alpha_0=H_1-E_1-E_4,\quad
 \alpha_1=H_2-E_2-E_5,\quad
 \alpha_2=H_1+H_2-E_3-E_6-E_7-E_8,\\
 &\beta_0=H_1+H_2-E_2-E_4-E_6-E_8,\quad
 \beta_1=H_1+H_2-E_1-E_3-E_5-E_7.
\end{align}
\end{subequations}
Note that the corresponding Cartan matrices of $\bigoplus_{i=0}^2\mathbb{Z}\alpha_i$ and $\bigoplus_{i=0}^1\mathbb{Z}\beta_i$
are of the $A_2^{(1)}$- and $A_1^{(1)}$-types, respectively:
\begin{equation}
 (a_{ij})_{i,j=0}^2=\begin{pmatrix}2&-1&-1\\-1&2&-1\\-1&-1&2\end{pmatrix},\quad
 (b_{ij})_{i,j=0}^1=\begin{pmatrix}2&-2\\-2&2\end{pmatrix},
\end{equation}
where
\begin{equation}
 a_{ij}=\cfrac{2(\alpha_i|\alpha_j)}{(\alpha_j|\alpha_j)},\quad
 b_{ij}=\cfrac{2(\beta_i|\beta_j)}{(\beta_j|\beta_j)}.
\end{equation}

\subsection{Picard group}
We give the algebro-geometric meaning to the module \eqref{eqn:module10}.
Let $(f,g)$ be inhomogeneous coordinates of $\mathbb{P}^1\times\mathbb{P}^1$
and $P_i\in\mathbb{P}^1\times\mathbb{P}^1$ be the following points:
\begin{subequations}\label{eqns:A5_bps}
\begin{align}
 &P_1: (f,g)=(-{a_0}^{-1},0),\\
 &P_2: (f,g)=(0,-a_1),\\
 &P_3: (f,g)=(0,\infty),
 &&P_7: (f,g;fg)=(0,\infty;-c^2a_0a_1{a_2}^2),\\
 &P_4: (f,g)=(-a_0,\infty),\\
 &P_5: (f,g)=(\infty,-{a_1}^{-1}),\\
 &P_6:(f,g)=(\infty,0),
 &&P_8: (f,g;fg)=(\infty,0;-c^2a_0a_1),
\end{align}
\end{subequations}
where $a_0,a_1,a_2,c\in\mathbb{C}^\ast$.
Let $X$ be the rational surface obtained by the blowing-up at the eight points, $\epsilon: X \to \mathbb{P}^1\times\mathbb{P}^1$,
and $E_i=\epsilon^{-1}(P_i)$, $i=1,\dots,8$, and $H_j$, $j=1,2$, be  the linear equivalence classes of
the total transform of the point of the $i$-th blow up and the coordinate lines $f$=constant and $g$=constant, respectively. 
Then, the module \eqref{eqn:module10} is called the Picard group of the rational surface $X$ and denoted by ${\rm Pic}(X)$:
\begin{equation}
 {\rm Pic}(X)=\bigoplus_{i=1}^2\mathbb{Z}H_i\bigoplus_{i=1}^8\mathbb{Z}E_i.
\end{equation}
Moreover, the bilinear form $(\,|\,)$ corresponds to the intersection form.

The anti-canonical divisor of $X$, denoted by $-K_X$, corresponds to the null root $\de$ 
of the affine root system defined by Equation \eqn{de} in Section \ref{twopic}. It is
uniquely decomposed into the prime divisors (or the simple affine roots of Section \ref{twopic}):
\begin{equation}\label{eqn:A5_-KX_decomposed}
\de= -K_X=2H_1+2H_2-\sum_{i=1}^8E_i=\sum_{i=0}^5D_i,
\end{equation}
where
\begin{subequations}\label{eqns:A5_Di}
\begin{align}
 &D_0=E_6-E_8,\quad
 D_1=H_2-E_1-E_6,\quad
 D_2=H_1-E_2-E_3,\\
 &D_3=E_3-E_7,\quad
 D_4=H_2-E_3-E_4,\quad
 D_5=H_1-E_5-E_6.
\end{align}
\end{subequations}
The submodule of ${\rm Pic}(X)$
\begin{equation}\label{eqn:root_A5}
 Q(A_5^{(1)})=\bigoplus_{i=0}^5\mathbb{Z}D_i,
\end{equation}
is the root system of type $A_5^{(1)}$ since its corresponding Cartan matrix is of type $A_5^{(1)}$:
\begin{equation}
 (d_{ij})_{i,j=0}^5
 =\begin{pmatrix}
 2&-1&0&0&0&-1\\
 -1&2&-1&0&0&0\\
 0&-1&2&-1&0&0\\
 0&0&-1&2&-1&0\\
 0&0&0&-1&2&-1\\
 -1&0&0&0&-1&2
 \end{pmatrix},
\end{equation}
where
\begin{equation}
 d_{ij}=\cfrac{2(D_i|D_j)}{(D_j|D_j)}.
\end{equation}
The corresponding Dynkin diagram is given in Figure \ref{N3} (where we have used the simple roots instead of the simple reflections
to denote the nodes).
Note that $mH_1+nH_2-\sum_{i=1}^8\mu_iE_i$ corresponds to a curve of bi-degree $(m,n)$ on $\mathbb{P}^1\times\mathbb{P}^1$
($(m,n)$-curve of $(f,g)$)
passing through the base points $P_i$ with multiplicity $\mu_i$.
Moreover, $E_i-E_j$ corresponds to an exceptional curve inserted into the base point $P_i$, 
which passes through the base point $P_j$.
Therefore, Equation \eqref{eqn:A5_-KX_decomposed} means that 
the $(2,2)$-curve passing through the eight base points $P_i$ is decomposed into 
the six curves on the surface $X$. 
This decomposition, thus the position of the base points, characterises the type of rational surface $X$.
Therefore, we refer to the surface $X$ as $A_5^{(1)}$-surface.
The most general case is $A_0^{(1)}$-surface and its degenerations are illustrated in Figure \ref{N1}, first
given by Sakai \cite{sak:01}.
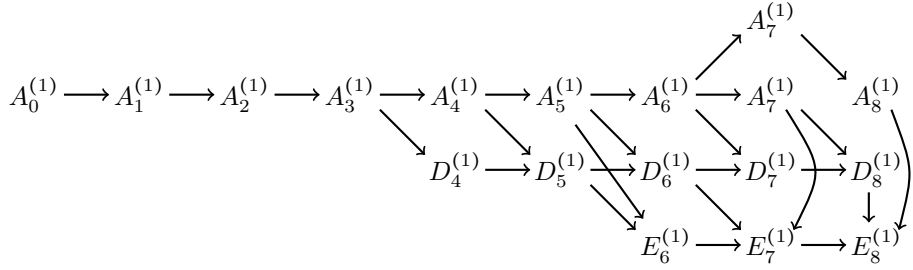
\begin{figure}[t]
\begin{center}
\begin{tikzpicture}[scale = 1]
\begin{scope}
\coordinate (P11s) at (0,0);
\coordinate (P11e) at ($(P11s)+(0.6,0)$);
\coordinate (P12s) at ($(P11e)+(0.8,0)$);
\coordinate (P12e) at ($(P12s)+(0.6,0)$);
\coordinate (P13s) at ($(P12e)+(0.8,0)$);
\coordinate (P13e) at ($(P13s)+(0.6,0)$);
\coordinate (P14s) at ($(P13e)+(0.8,0)$);
\coordinate (P14e) at ($(P14s)+(0.6,0)$);
\coordinate (P15s) at ($(P14e)+(0.8,0)$);
\coordinate (P15e) at ($(P15s)+(0.6,0)$);
\coordinate (P16s) at ($(P15e)+(0.8,0)$);
\coordinate (P16e) at ($(P16s)+(0.6,0)$);
\coordinate (P17s) at ($(P16e)+(0.8,0)$);
\coordinate (P17e) at ($(P17s)+(0.6,0)$);
\coordinate (P18s) at ($(P17e)+(0.8,0)$);
\coordinate (P18e) at ($(P18s)+(0.6,0)$);
\coordinate (P19s) at ($(P18e)+(0.8,0)$);
\coordinate (P19e) at ($(P19s)+(0.6,0)$);
\coordinate (P21s) at (0,-1);
\coordinate (P21e) at ($(P21s)+(0.6,0)$);
\coordinate (P22s) at ($(P21e)+(0.8,0)$);
\coordinate (P22e) at ($(P22s)+(0.6,0)$);
\coordinate (P23s) at ($(P22e)+(0.8,0)$);
\coordinate (P23e) at ($(P23s)+(0.6,0)$);
\coordinate (P24s) at ($(P23e)+(0.8,0)$);
\coordinate (P24e) at ($(P24s)+(0.6,0)$);
\coordinate (P25s) at ($(P24e)+(0.8,0)$);
\coordinate (P25e) at ($(P25s)+(0.6,0)$);
\coordinate (P26s) at ($(P25e)+(0.8,0)$);
\coordinate (P26e) at ($(P26s)+(0.6,0)$);
\coordinate (P27s) at ($(P26e)+(0.8,0)$);
\coordinate (P27e) at ($(P27s)+(0.6,0)$);
\coordinate (P28s) at ($(P27e)+(0.8,0)$);
\coordinate (P28e) at ($(P28s)+(0.6,0)$);
\coordinate (P29s) at ($(P28e)+(0.8,0)$);
\coordinate (P29e) at ($(P29s)+(0.6,0)$);
\coordinate (P31s) at (0,-2);
\coordinate (P31e) at ($(P31s)+(0.6,0)$);
\coordinate (P32s) at ($(P31e)+(0.8,0)$);
\coordinate (P32e) at ($(P32s)+(0.6,0)$);
\coordinate (P33s) at ($(P32e)+(0.8,0)$);
\coordinate (P33e) at ($(P33s)+(0.6,0)$);
\coordinate (P34s) at ($(P33e)+(0.8,0)$);
\coordinate (P34e) at ($(P34s)+(0.6,0)$);
\coordinate (P35s) at ($(P34e)+(0.8,0)$);
\coordinate (P35e) at ($(P35s)+(0.6,0)$);
\coordinate (P36s) at ($(P35e)+(0.8,0)$);
\coordinate (P36e) at ($(P36s)+(0.6,0)$);
\coordinate (P37s) at ($(P36e)+(0.8,0)$);
\coordinate (P37e) at ($(P37s)+(0.6,0)$);
\coordinate (P38s) at ($(P37e)+(0.8,0)$);
\coordinate (P38e) at ($(P38s)+(0.6,0)$);
\coordinate (P39s) at ($(P38e)+(0.8,0)$);
\coordinate (P39e) at ($(P39s)+(0.6,0)$);
\coordinate (P41s) at (0,-3);
\coordinate (P41e) at ($(P41s)+(0.6,0)$);
\coordinate (P42s) at ($(P41e)+(0.8,0)$);
\coordinate (P42e) at ($(P42s)+(0.6,0)$);
\coordinate (P43s) at ($(P42e)+(0.8,0)$);
\coordinate (P43e) at ($(P43s)+(0.6,0)$);
\coordinate (P44s) at ($(P43e)+(0.8,0)$);
\coordinate (P44e) at ($(P44s)+(0.6,0)$);
\coordinate (P45s) at ($(P44e)+(0.8,0)$);
\coordinate (P45e) at ($(P45s)+(0.6,0)$);
\coordinate (P46s) at ($(P45e)+(0.8,0)$);
\coordinate (P46e) at ($(P46s)+(0.6,0)$);
\coordinate (P47s) at ($(P46e)+(0.8,0)$);
\coordinate (P47e) at ($(P47s)+(0.6,0)$);
\coordinate (P48s) at ($(P47e)+(0.8,0)$);
\coordinate (P48e) at ($(P48s)+(0.6,0)$);
\coordinate (P49s) at ($(P48e)+(0.8,0)$);
\coordinate (P49e) at ($(P49s)+(0.6,0)$);
\node at ($(P18s)-(0.4,0)$){$A_7^{(1)}$};
\node at ($(P21s)-(0.4,0)$){$A_0^{(1)}$};
\node at ($(P22s)-(0.4,0)$){$A_1^{(1)}$};
\node at ($(P23s)-(0.4,0)$){$A_2^{(1)}$};
\node at ($(P24s)-(0.4,0)$){$A_3^{(1)}$};
\node at ($(P25s)-(0.4,0)$){$A_4^{(1)}$};
\node at ($(P26s)-(0.4,0)$){$A_5^{(1)}$};
\node at ($(P27s)-(0.4,0)$){$A_6^{(1)}$};
\node at ($(P28s)-(0.4,0)$){$A_7^{(1)}$};
\node at ($(P29s)-(0.4,0)$){$A_8^{(1)}$};
\node at ($(P35s)-(0.4,0)$){$D_4^{(1)}$};
\node at ($(P36s)-(0.4,0)$){$D_5^{(1)}$};
\node at ($(P37s)-(0.4,0)$){$D_6^{(1)}$};
\node at ($(P38s)-(0.4,0)$){$D_7^{(1)}$};
\node at ($(P39s)-(0.4,0)$){$D_8^{(1)}$};
\node at ($(P47s)-(0.4,0)$){$E_6^{(1)}$};
\node at ($(P48s)-(0.4,0)$){$E_7^{(1)}$};
\node at ($(P49s)-(0.4,0)$){$E_8^{(1)}$};
\draw [->, thick] (P21s)--(P21e);
\draw [->, thick] (P22s)--(P22e);
\draw [->, thick] (P23s)--(P23e);
\draw [->, thick] (P24s)--(P24e);
\draw [->, thick] (P25s)--(P25e);
\draw [->, thick] (P26s)--(P26e);
\draw [->, thick] (P27s)--(P27e);
\draw [->, thick] (P35s)--(P35e);
\draw [->, thick] (P36s)--(P36e);
\draw [->, thick] (P37s)--(P37e);
\draw [->, thick] (P38s)--(P38e);
\draw [->, thick] (P47s)--(P47e);
\draw [->, thick] (P48s)--(P48e);
\draw [->, thick] ($(P27s)+(0,0.2)$)--($(P17e)-(0,0.2)$);
\draw [->, thick] ($(P18s)-(0,0.2)$)--($(P28e)+(0,0.2)$);
\draw [->, thick] ($(P24s)-(0,0.2)$)--($(P34e)+(0,0.2)$);
\draw [->, thick] ($(P25s)-(0,0.2)$)--($(P35e)+(0,0.2)$);
\draw [->, thick] ($(P26s)-(0,0.2)$)--($(P36e)+(0,0.2)$);
\draw [->, thick] ($(P27s)-(0,0.2)$)--($(P37e)+(0,0.2)$);
\draw [->, thick] ($(P28s)-(0,0.2)$)--($(P38e)+(0,0.2)$);
\draw [->, thick] ($(P36s)-(0,0.2)$)--($(P46e)+(0,0.2)$);
\draw [->, thick] ($(P37s)-(0,0.2)$)--($(P47e)+(0,0.2)$);
\draw [->, thick] ($(P38e)+(0.3,-0.3)$)--($(P48e)+(0.3,0.3)$);
\draw [->, thick] ($(P26s)-(0.2,0.4)$)--($(P46e)+(0.1,0.35)$);
\draw[->, thick] ($(P28s)-(0.2,0.2)$) .. controls ($(P38s)+(0.3,0)$) .. ($(P48s)+(-0.1,0.2)$);
\draw[->, thick] ($(P29s)-(0.2,0.2)$) .. controls ($(P39s)+(0.1,0)$) .. ($(P49s)+(-0.1,0.2)$);

\end{scope}
\end{tikzpicture}
\caption{Degeneration of types of rational surfaces}\label{N1}
\end{center}
\end{figure}


The root systems orthogonal to those associated with surfaces are also important.
In the case of  $Q(A_5^{(1)})$, defined in Equation \eqref{eqn:root_A5}, the orthogonal root system is $Q((A_2+A_1)^{(1)})$ 
defined in Equation \eqref{eqn:root_A2A1}.
The cascade of the orthogonal root system corresponding to the degenerations of surfaces in Figure \ref{N1}
was given earlier in Figure \ref{N2}.

\subsection{Cremona isometries}
We consider the Cremona isometries for the $A_5^{(1)}$-surface $X$.
A Cremona isometry is defined by an automorphism of Pic$(X)$ which preserves 
\begin{description}
\item[(i)]
the intersection form on Pic$(X)$;
\item[(ii)]
the canonical divisor $K_X$;
\item[(iii)]
effectiveness of each effective divisor of Pic$(X)$.
\end{description}
The reflections for simple roots $\alpha_i$, $i=0,1,2$, and $\beta_i$, $i=0,1$,
and automorphisms of the Dynkin diagram corresponding to the divisors $D_i$, $i=0,\dots,5$,
are Cremona isometries and form the extended affine Weyl group of type $(A_2+A_1)^{(1)}$ \cite{sak:01}.

We define the right actions of the reflections $s_i$, $i=0,1,2$, and $w_j$, $j=0,1$, respectively across the hyperplane orthogonal to the root $\alpha_i$, $i=0,1,2$, 
and $\beta_j$, $j=0,1$, by the following:
\begin{equation}\label{ean:def_si_Picard}
 v.s_i=v-\cfrac{2(v|\alpha_i)}{(\alpha_i|\alpha_i)}\,\alpha_i,\quad
 v.w_j=v-\cfrac{2(v|\beta_j)}{(\beta_j|\beta_j)}\,\beta_j
\end{equation}
for all $v\in {\rm Pic}(X)$.
We also define the right actions of the diagram automorphisms Aut$(A_5^{(1)})=\langle \sigma\rangle$:
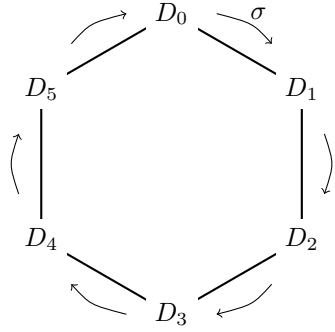
\begin{figure}[t]
\begin{center}
\begin{tikzpicture}[scale = 1]
\begin{scope}
\coordinate (P0) at (0,2);
\coordinate (P1) at (1.732,1);
\coordinate (P2) at (1.732,-1);
\coordinate (P3) at (0,-2);
\coordinate (P4) at (-1.732,-1);
\coordinate (P5) at (-1.732,1);
\coordinate (P01) at ($(1.732/1.6,3/1.6)$);
\coordinate (P12) at ($(2*1.732/1.6,0)$);
\coordinate (P23) at ($(1.732/1.6,-3/1.6)$);
\coordinate (P34) at ($(-1.732/1.6,-3/1.6)$);
\coordinate (P45) at ($(-2*1.732/1.6,0)$);
\coordinate (P50) at ($(-1.732/1.6,3/1.6)$);
\coordinate (PP01) at ($(1.732/1.5,3/1.5)$);
\coordinate (PP12) at ($(2*1.732/1.5,0)$);
\coordinate (PP23) at ($(1.732/1.5,-3/1.5)$);
\coordinate (PP34) at ($(-1.732/1.5,-3/1.5)$);
\coordinate (PP45) at ($(-2*1.732/1.5,0)$);
\coordinate (PP50) at ($(-1.732/1.5,3/1.5)$);
\draw [thick] (P0)--(P1);
\draw [thick] (P1)--(P2);
\draw [thick] (P2)--(P3);
\draw [thick] (P3)--(P4);
\draw [thick] (P4)--(P5);
\draw [thick] (P5)--(P0);
\node[fill=white] at (P0) {$D_0$};
\node[fill=white] at (P1) {$D_1$};
\node[fill=white] at (P2) {$D_2$};
\node[fill=white] at (P3) {$D_3$};
\node[fill=white] at (P4) {$D_4$};
\node[fill=white] at (P5) {$D_5$};
\node[fill=white] at (PP01) {$\sigma$};
\draw[->] ($(P0)+(0.6,0)$) .. controls (P01) .. ($(P1)+(-0.4,0.6)$);
\draw[->] ($(P1)+(0.3,-0.6)$) .. controls (P12) .. ($(P2)+(0.3,0.6)$);
\draw[->] ($(P2)+(-0.4,-0.6)$) .. controls (P23) .. ($(P3)+(0.6,0)$);
\draw[->] ($(P3)+(-0.6,0)$) .. controls (P34) .. ($(P4)+(0.4,-0.6)$);
\draw[->] ($(P4)+(-0.3,0.6)$) .. controls (P45) .. ($(P5)+(-0.3,-0.6)$);
\draw[->] ($(P5)+(0.4,0.6)$) .. controls (P50) .. ($(P0)+(-0.6,0)$);
\end{scope}
\end{tikzpicture}
\caption{Dynkin diagram of $Q(A_5^{(1)})$}\label{N3}
\end{center}
\end{figure}
by
\begin{align}
 &(H_1,H_2,E_1,E_2,E_3,E_4,E_5,E_6,E_7,E_8).\sigma\notag\\
 &=(H_2,H_1+H_2-E_3-E_6,E_2,E_7,H_2-E_3,E_5,E_8,H_2-E_6,E_4,E_1).
\end{align}
We can easily verify that the linear actions of 
$$\widetilde{W}((A_2+A_1)^{(1)})=\langle s_0,s_1,s_2,w_0,w_1,\sigma\rangle$$
on Pic$(X)$ satisfy the fundamental relations of the extended affine Weyl group of type $(A_2+A_1)^{(1)}$:
\begin{subequations}\label{eqn:fundamental_A2A1}
\begin{align}
 &{s_i}^2=(s_i s_{i+1})^3=1,\quad
 {w_j}^2=(w_j  w_{j+1})^\infty=1,\quad
 (s_i  w_j)^2=1,\\
 &\sigma^6=1,\quad
 s_i \sigma=\sigma  s_{i+1},\quad
 w_j \sigma=\sigma  w_{j+1},
\end{align}
\end{subequations}
where $i\in\mathbb{Z}/3\mathbb{Z}$ and $j\in\mathbb{Z}/2\mathbb{Z}$.

Define
\begin{equation}
 \pi=\sigma^2,\quad
 r=\sigma^3,
\end{equation}
the right actions of $\pi$ and $r$ on the simple roots of $Q((A_2+A_1)^{(1)})$ are given by
\begin{subequations}
\begin{align}
 &(\alpha_0,\alpha_1,\alpha_2,\beta_0,\beta_1).\pi=(\alpha_2,\alpha_0,\alpha_1,\beta_0,\beta_1),\\
 &(\alpha_0,\alpha_1,\alpha_2,\beta_0,\beta_1).r=(\alpha_0,\alpha_1,\alpha_2,\beta_1,\beta_0).
\end{align}
\end{subequations}

The corresponding Dynkin diagram is given in Figure \ref{a2a1}.
\begin{figure}
\begin{tikzpicture}
\node  (a1) {$\circ$};
\node [right=of a1](a2) {} ;
\node [right=of a2](a3) {$\circ$} ;
\node (a4) at (1.4, -0.9) {$\circ$} ;
\node [right=of a3](a5) {$\circ$} ;
\node [right=of a5](a6) {$\circ$} ;
\draw (a1) node [anchor=south] {$\al_1$} ;
\draw (a2) node [anchor=south] {} ;
\draw (a3) node [anchor=south] {$\al_{2}$} ;
\draw (a4) node [anchor=north] {$\al_{0}$} ;
\draw[-] (a1) -- node {} (a3);
\draw[-] (a1) -- node {} (a4);
\draw[-] (a3) -- node {} (a4);
\path[use as bounding box] (-1.5,0) rectangle (0,0);
\draw (1.45, 0.6) node  {$\curvearrowleft$};
\draw (0.4, -0.9) node [rotate=135] {$\curvearrowleft$};
\draw (2.4, -0.9) node [rotate=-135] {$\curvearrowleft$};
\draw  (1.4, 0.3) node {$\pi$};
\draw (a5) node [anchor=north] {$\be_1$} ;
\draw (a6) node [anchor=north] {$\be_0$} ;
\draw[-] (a5) -- node {} (a6);
\node (r) at ($(a5)!0.5!(a6)$) {};
\draw (r) node [anchor=south] {$r$} ;
\draw (r) node [anchor=north] {$\infty$} ;
\draw (4.8, 0.5) node  [rotate=-90] {$\curvearrowupdown$};
\end{tikzpicture}
\caption{Dynkin diagram for $Q((A_2+A_1)^{(1)})$ }\label{a2a1}
\end{figure}
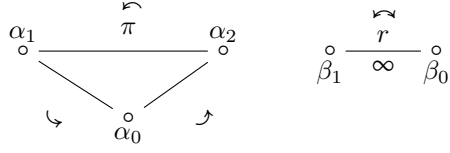

The action of $\widetilde{W}((A_2+A_1)^{(1)})$ is lifted to the level of the birational action
on the variables $f,g$ and the parameters $a_i$, $i=0,1,2$, and $c$ as stated in the following lemma.

\begin{lemma}\label{lemma:birational_action_A2A1}
Let
\begin{equation}
 q=a_0a_1a_2,\quad
 f_0=f,\quad f_1=g,\quad f_2=\cfrac{qc^2}{fg}.
\end{equation}
The left actions of $\widetilde{W}((A_2+A_1)^{(1)})$ on parameters are given by
\begin{align*}
 &s_i:(a_i,a_{i+1},a_{i+2},c)\mapsto({a_i}^{-1},a_ia_{i+1},a_ia_{i+2},c),
 &&\pi:(a_0,a_1,a_2,c)\mapsto(a_1,a_2,a_0,c),\\
 &w_0:(a_0,a_1,a_2,c)\mapsto(a_0,a_1,a_2,c^{-1}),
 &&w_1:(a_0,a_1,a_2,c)\mapsto(a_0,a_1,a_2,q^{-2}c^{-1}),\\
 &r:(a_0,a_1,a_2,c)\mapsto(a_0,a_1,a_2,q^{-1}c^{-1}),
\end{align*}
where $i\in\mathbb{Z}/3\mathbb{Z}$\,,
while its actions on variables are given by
\begin{align*}
 &s_i(f_{i-1})=\cfrac{f_{i-1}(1+a_if_i)}{a_i+f_i},\quad
 s_i(f_i)=f_i,\quad
 s_i(f_{i+1})=\cfrac{f_{i+1}(a_i+f_i)}{1+a_if_i},\quad
 \pi(f_i) = f_{i+1},\\
 &w_0(f_i)=\cfrac{a_ia_{i+1}(a_{i-1}a_i+a_{i-1}f_i+f_{i-1}f_i)}
  {f_{i-1}(a_ia_{i+1}+a_if_{i+1}+f_if_{i+1})},\\
 &w_1(f_i)=\cfrac{1+a_if_i+a_ia_{i+1}f_if_{i+1}}
  {a_ia_{i+1}f_{i+1}(1+a_{i-1}f_{i-1}+a_{i-1}a_if_{i-1}f_i)},\quad
  r(f_i)=\cfrac{1}{f_i},
\end{align*}
where $i\in\mathbb{Z}/3\mathbb{Z}$\,.
Note that for a function $F=F(a_0,a_1,a_2,c,f,g)$, we let an element $w\in\widetilde{W}((A_2+A_1)^{(1)})$ act as
\begin{equation}
 w.F=F(w.a_0,w.a_1,w.a_2,w.c,w.f,w.g),
\end{equation}
and the parameter $q$ is invariant under the action of $\widetilde{W}((A_2+A_1)^{(1)})$.
\end{lemma}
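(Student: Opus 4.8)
The plan is to prove the lemma in two stages. First, one checks that the displayed birational maps \emph{realise} the Cremona isometries of $X$ defined in \eqref{ean:def_si_Picard} together with the action of $\sigma$ recorded just above: that is, each map induces, by pullback of divisor classes, exactly the corresponding linear automorphism of $\mathrm{Pic}(X)$. Second, one deduces the fundamental relations \eqref{eqn:fundamental_A2A1} and the invariance of $q$.

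For the first stage I would argue generator by generator. Recall that $X$ is the blow-up of $\mathbb{P}^1\times\mathbb{P}^1$ at the eight points $P_1,\dots,P_8$ of \eqref{eqns:A5_bps}, whose coordinates are rational functions of $(a_0,a_1,a_2,c)$; write $P_i(a_0,a_1,a_2,c)$ for this dependence. A birational automorphism $\varphi$ of $\mathbb{P}^1\times\mathbb{P}^1$ lifts to an isomorphism of the blow-up at $\{P_i\}$ with the blow-up at $\{\varphi(P_i)\}$, and it is a Cremona isometry realising a given $w$ precisely when $\{\varphi(P_i)\}=\{P_j(w.a_0,w.a_1,w.a_2,w.c)\}$, the individual points matched according to the permutation/transformation of the classes $E_i$ prescribed by the linear action of $w$. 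For the reflections $s_i$ this is immediate from the M\"obius-type formulas in the lemma: $s_i$ fixes the coordinate $f_i$ (hence the base points on the divisors $D_j$ orthogonal to $\alpha_i$), and carries the remaining base points to their parameter-shifted positions via $s_i(f_{i-1})=f_{i-1}(1+a_if_i)/(a_i+f_i)$ and $s_i(f_{i+1})=f_{i+1}(a_i+f_i)/(1+a_if_i)$. For $r=\sigma^3$ the map is the involution $(f,g)\mapsto(1/f,1/g)$, which visibly sends $P_1\leftrightarrow P_4$, $P_2\leftrightarrow P_5$, $P_3\leftrightarrow P_6$ and, once the prescribed shift $c\mapsto q^{-1}c^{-1}$ is imposed, $P_7\leftrightarrow P_8$, matching $(E_1,\dots,E_8).\sigma^3$; for $\pi=\sigma^2$ the map $f_i\mapsto f_{i+1}$ has order three and permutes the $P_i$ in triples, and then $\sigma=\pi^2r$ is determined. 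The only genuinely laborious generators are $w_0$ and $w_1$: I would pin these down by computing the classes $w_j.H_1$ and $w_j.H_2$ from \eqref{ean:def_si_Picard} (e.g. $w_0.H_1=2H_1+H_2-E_2-E_4-E_6-E_8$), taking the linear systems of the curves of that bidegree through the prescribed base points with the prescribed multiplicities, forming the corresponding ratios of sections, and verifying that the result agrees with the formula in the lemma and carries $\{P_i\}$ onto $\{P_i(w_j.\text{params})\}$.

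Once each generator is identified as a birational automorphism of the family lifting the corresponding Cremona isometry, the relations follow. On $\mathrm{Pic}(X)$ the relations \eqref{eqn:fundamental_A2A1} hold by Sakai's classification of the $A_5^{(1)}$-surface (the computation indicated before the lemma), and for generic $(a_0,a_1,a_2,c)$ a birational automorphism acting trivially on $\mathrm{Pic}(X)$ and fixing all parameters is the identity; hence $w\mapsto\varphi_w$ is a group homomorphism, provided the induced parameter transformations also satisfy the relations, which is a routine monomial computation. Indeed $s_i$ multiplies $(a_i,a_{i+1},a_{i+2})$ by $(a_i^{-1},a_i,a_i)$, so $q=a_0a_1a_2$ and $c$ are preserved as claimed and $s_i^2=(s_is_{i+1})^3=1$ on parameters; $\pi$ cyclically permutes the $a_i$; $w_0,w_1,r$ fix every $a_i$; combining these gives $\sigma^6=1$, $s_i\sigma=\sigma s_{i+1}$, $w_j\sigma=\sigma w_{j+1}$ and $(s_iw_j)^2=1$. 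The relation $(w_jw_{j+1})^\infty=1$ holds because $w_0w_1$ multiplies $c$ by a nontrivial power of $q$ (all $a_i$ fixed) and so has infinite order — equivalently because it acts on $\mathrm{Pic}(X)$ as a nonzero translation in the root lattice $Q((A_2+A_1)^{(1)})$. Invariance of $q$ under the whole group is then immediate. (Alternatively every relation can be verified head-on by composing the rational functions in the lemma; this is routine but lengthy, and it is the geometric identification that keeps the bookkeeping under control.)

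The main obstacle I expect is exactly the explicit determination of $w_0$ and $w_1$ and the check that they are compatible with the prescribed action on $E_1,\dots,E_8$: the reflections $s_i$ and the maps $\pi,r$ are essentially immediate, but $w_0,w_1$ mix the base points in a nontrivial way through linear systems of higher bidegree, and $\sigma$ interchanges the two $\mathbb{P}^1$ factors, so this is the one place where a sign, an inversion, or a parameter exponent could go wrong. Getting those linear systems and the matching of base points precisely right is the crux.
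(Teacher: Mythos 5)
Your proposal is correct and follows essentially the same route as the paper: the paper's proof likewise identifies each generator's birational formula by matching the prescribed linear action on ${\rm Pic}(X)$ with linear systems of curves of the corresponding bidegree and the images of the base points $P_1,\dots,P_8$ under the parameter change (carried out in detail for $s_0$, with the other generators ``in a similar manner''), and then checks the fundamental relations \eqref{eqn:fundamental_A2A1} and the invariance of $q$ directly. Your extra remarks on $w_0,w_1$ (e.g.\ $H_1.w_0=2H_1+H_2-E_2-E_4-E_6-E_8$) and on $r,\pi$ are consistent with that computation.
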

\begin{proof}
We consider the birational action of $s_0$
and denote its action by
\begin{equation}
 \overline{f}=s_0(f),\quad
 \overline{g}=s_0(g),\quad
 \overline{a}_i=s_0(a_i),~i=0,1,2,\quad
 \overline{c}=s_0(c).
\end{equation} 
We claim
\begin{equation}\label{eqn:claim_s0}
 \left(\begin{matrix}\overline{a}_0,\overline{a}_1,\overline{a}_2\\\overline{c}\end{matrix}\,;\overline{f},\overline{g}\right)
 =\left(\begin{matrix}{a_0}^{-1},a_0a_1,a_0a_2\\c\end{matrix}\,;f,\cfrac{g(a_0+f)}{1+a_0f}\right).
\end{equation}
We use the following notation.
Let $\overline{X}$ be the rational surface obtained by the blowing-up at the following base points:
\begin{subequations}
\begin{align}
 &\overline{P}_1: (\overline{f},\overline{g})=(-{\overline{a}_0}^{-1},0),\\
 &\overline{P}_2: (\overline{f},\overline{g})=(0,-\overline{a}_1),\\
 &\overline{P}_3: (\overline{f},\overline{g})=(0,\infty),
 &&\overline{P}_7: (\overline{f},\overline{g};\overline{f}\overline{g})=(0,\infty;-\overline{c}^2\overline{a}_0\overline{a}_1{\overline{a}_2}^2),\\
 &\overline{P}_4: (\overline{f},\overline{g})=(-\overline{a}_0,\infty),\\
 &\overline{P}_5: (\overline{f},\overline{g})=(\infty,-{\overline{a}_1}^{-1}),\\
 &\overline{P}_6:(\overline{f},\overline{g})=(\infty,0),
 &&\overline{P}_8: (\overline{f},\overline{g};\overline{f}\overline{g})=(\infty,0;-\overline{c}^2\overline{a}_0\overline{a}_1),
\end{align}
\end{subequations}
and $\overline{E}_i$, $i=1,\dots,8$, and $\overline{H}_j$, $j=1,2$, 
be the linear equivalence classes of the total transform of the point of the $i$-th blow up and 
the coordinate lines $\overline{f}$=constant and $\overline{g}$=constant, respectively. 
Definition \eqref{ean:def_si_Picard} gives the following actions:
\begin{subequations}
\begin{align}
 &H_1.s_0=\overline{H}_1,\quad
 H_2.s_0=\overline{H}_1+\overline{H}_2-\overline{E}_1-\overline{E}_4,\quad
 E_i.s_0=\overline{E}_i,~i\neq 1,4,\\
 &E_1.s_0=\overline{H}_1-\overline{E}_4,\quad
 E_4.s_0=\overline{H}_1-\overline{E}_1,
\end{align}
\end{subequations}
and
\begin{subequations}
\begin{align}
 &\overline{H}_1.s_0=H_1,\quad
 \overline{H}_2.s_0=H_1+H_2-E_1-E_4,\quad
 \overline{E}_i.s_0=E_i,~i\neq 1,4,\\
 &\overline{E}_1.s_0=H_1-E_4,\quad
 \overline{E}_4.s_0=H_1-E_1.
\end{align}
\end{subequations}
Since the relations 
\begin{subequations}
\begin{align}
 &H_1.s_0=\overline{H}_1,\\
 &\overline{H}_2.s_0=H_1+H_2-E_1-E_4,
\end{align}
\end{subequations}
respectively indicate that 
the $(1,0)$-curve of $(f,g)$ is paired with the $(1,0)$-curve of $(\overline{f},\overline{g})$,
and the $(0,1)$-curve of $(\overline{f},\overline{g})$ is paired with 
the $(1,1)$-curve of $(f,g)$ passing through the base points $P_1$ and $P_4$,
we can set
\begin{subequations}\label{eqn:proof_1}
\begin{align}
 &\overline{f}(A_1f+A_2)+A_3f+A_4=0,\\
 &\overline{g}\left(B_1g(f+a_0)+B_2(f+{a_0}^{-1})\right)+B_3g(f+a_0)+B_4(f+{a_0}^{-1})=0.
\end{align}
\end{subequations}
Since the relations
\begin{equation}
 \overline{E}_i=E_i.s_0,\quad i=2,3,5,6,
\end{equation}
describe
\begin{equation}
 \left.\left(\overline{f},\overline{g}\right)\,\right|_{\overline{P}_i}
 =\left.\left(-\cfrac{A_3f+A_4}{A_1f+A_2}, -\cfrac{B_3g(f+a_0)+B_4(f+{a_0}^{-1})}{B_1g(f+a_0)+B_2(f+{a_0}^{-1})}\right)\,\right|_{P_i},\quad
 i=2,3,5,6,
\end{equation}
respectively, we obtain
\begin{equation}\label{eqn:proof_2}
 A_1=A_4=B_1=B_4=0,\quad
 B_2=-a_0 B_3,\quad
 \overline{a}_1=a_0a_1.
\end{equation}
Moreover, from
\begin{equation}
 \overline{E}_i=E_i.s_0,\quad
 i=7,8,
\end{equation}
we get
\begin{equation}
 \left.\left(\overline{f},\overline{g};\overline{f}\overline{g}\right)\,\right|_{\overline{P}_i}
 =\left.\left(-\cfrac{A_3}{A_2}\,f, g\,\cfrac{a_0+f}{1+a_0f}\,;-\cfrac{A_3}{A_2}\,\cfrac{a_0+f}{1+a_0f}\,fg\right)\,\right|_{P_i},\quad
 i=7,8,
\end{equation}
respectively, which lead to
\begin{equation}\label{eqn:proof_3}
 \cfrac{A_3}{A_2}=-\cfrac{a_0\overline{a}_0\overline{c}^2}{c^2},\quad
 \overline{a}_2=a_0a_2.
\end{equation}
Finally, from 
\begin{equation}
 \overline{H}_1-\overline{E}_4=E_1.s_0,\quad
 \overline{H}_1-\overline{E}_1=E_4.s_0,
\end{equation}
we obtain
\begin{equation}
 \overline{f}\,|_{\overline{P}_4}=\cfrac{a_0\overline{a}_0\overline{c}^2}{c^2}\,f\,|_{P_1},\quad
 \overline{f}\,|_{\overline{P}_1}=\cfrac{a_0\overline{a}_0\overline{c}^2}{c^2}\,f\,|_{P_4},
\end{equation}
respectively, which give
\begin{equation}\label{eqn:proof_4}
 \overline{a}_0={a_0}^{-1},\quad
 \overline{c}=c.
\end{equation}
Equations \eqref{eqn:proof_1}, \eqref{eqn:proof_2}, \eqref{eqn:proof_3} and \eqref{eqn:proof_4} provide the claim for $s_0$ \eqref{eqn:claim_s0}.
In a similar manner, we can prove the other birational actions.
\end{proof}

We can easily verify that the birational actions of $\widetilde{W}((A_2+A_1)^{(1)})$ 
given in Lemma \ref{lemma:birational_action_A2A1}
also satisfy the fundamental relations \eqref{eqn:fundamental_A2A1}.

\subsection{Discrete Painlev\'e equations}
It is well known that the translation part of $\widetilde{W}((A_2+A_1)^{(1)})$, which are translations in $Q((A_2+A_1)^{(1)})$,
give discrete Painlev\'e equations \cite{sak:01}.
We introduce the translations $T_i$, $i=1,2,3,4$, defined by
\begin{equation}
 T_1=\pi  s_2  s_1,\quad
 T_2=\pi  s_0  s_2,\quad
 T_3=\pi  s_1  s_0,\quad
 T_4=r  w_0.
\end{equation}
Note that $T_i$, $i =1,2,3,4$, commute with each other and $T_1T_2T_3=1$.
{\Rem
Observe that by convention \eqn{convd} we see that $T_1=t_1^{-1}=s_1s_2\rho^{-1}$, where $\pi=\rho^{-1}$,  
and $t_1$, defined in Equation \eqn{TaA2}, acts from the left
 is the translational element associated
with translation by the fundamental weight $h_1$ in $V^{(1)}$.
}

These are translation in the root system $Q((A_2+A_1)^{(1)})$:
\begin{subequations}
\begin{align}
 (\alpha_0,\alpha_1,\alpha_2,\beta_0,\beta_1).T_1
 =&(\alpha_2,\alpha_0,\alpha_1,\beta_0,\beta_1).s_2  s_1\notag\\
 =&(-\alpha_2,\alpha_0+\alpha_2,\alpha_1+\alpha_2,\beta_0,\beta_1).s_1\notag\\
 =&(\alpha_0,\alpha_1,\alpha_2,\beta_0,\beta_1)+(-1,1,0,0,0)\,\delta,\\
 (\alpha_0,\alpha_1,\alpha_2,\beta_0,\beta_1).T_2
 =&(\alpha_0,\alpha_1,\alpha_2,\beta_0,\beta_1)+(0,-1,1,0,0)\,\delta,\\
 (\alpha_0,\alpha_1,\alpha_2,\beta_0,\beta_1).T_3
 =&(\alpha_0,\alpha_1,\alpha_2,\beta_0,\beta_1)+(1,0,-1,0,0)\,\delta,\\
 (\alpha_0,\alpha_1,\alpha_2,\beta_0,\beta_1).T_4
 =&(\alpha_0,\alpha_1,\alpha_2,\beta_0,\beta_1)+(0,0,0,1,-1)\,\delta,
\end{align}
\end{subequations}
where we have used
\beqn
\delta=-K_X=\alpha_0+\alpha_1+\alpha_2=\beta_0+\beta_1,
\eeqn
and also these actions on the parameters are the translational motions as follows:
\begin{subequations}
\begin{align}
 T_1.(a_0,a_1,a_2,c)
 =&\pi  s_2.(a_0a_1,{a_1}^{-1},a_2a_1,c)
 =\pi.(qa_2,q^{-1}a_0,a_1,c)\notag\\
 =&(qa_0,q^{-1}a_1,a_2,c),\\
 T_2.(a_0,a_1,a_2,c)
 =&(a_0,qa_1,q^{-1}a_2,c),\\
 T_3.(a_0,a_1,a_2,c)
 =&(q^{-1}a_0,a_1,qa_2,c),\\
 T_4.(a_0,a_1,a_2,c)
 =&(a_0,a_1,a_2,qc).
\end{align}
\end{subequations}


Moreover, their actions on the variables are given by
\begin{subequations}
\begin{align}
 &T_i(f_i)=\cfrac{qc^2}{f_if_{i-1}}\,\cfrac{1+a_{i-1}f_{i-1}}{a_{i-1}+f_{i-1}},\quad
 T_i(f_{i-1})=\cfrac{qc^2}{f_{i-1}T_i(f_i)}\,\cfrac{1+a_{i-1}a_{i+1}T_i(f_i)}{a_{i-1}a_{i+1}+T_i(f_i)},\\
 &T_i(f_{i-2})=\cfrac{qc^2}{T_i(f_i)T_i(f_{i-1})},\quad
 T_4(f_i)=a_ia_{i+1} f_{i+1}\,\cfrac{1+a_{i-1} f_{i-1}(a_i f_i+1)}{1+a_i f_i(a_{i+1} f_{i+1}+1)},
\end{align}
\end{subequations}
where $i\in\mathbb{Z}/3\mathbb{Z}$.
The action of $T_1$ on $f_0$ and $f_1$ leads to a system of first-order ordinary difference equations:
\begin{equation}\label{eqn:qp3}
 G_{n+1}G_n=\cfrac{q c^2}{F_n}\,\cfrac{1+q^na_0F_n}{q^na_0+F_n},\quad
 F_{n+1}F_n=\cfrac{qc^2}{G_{n+1}}\,\cfrac{1+q^na_0a_2 G_{n+1}}{q^na_0a_2+G_{n+1}},
\end{equation}
where
\begin{equation}
 F_n={T_1}^n(f_0),\quad G_n={T_1}^n(f_1),
\end{equation}
which is known as a $q$-discrete analogue of Painlev\'e III equation
\cite{KTGR:00, sak:01}.
In a similar manner, in each of the $T_2$- and $T_3$-directions, we also obtain $q$-P$_{\rm III}$.
In contrast, the action of $T_4$ on $f_i$, $i=0,1,2$ and $f_0f_1f_2=qc^2$
lead to a $q$-discrete analogue of Painlev\'e IV equation \cite{KNY:01}:
\begin{subequations}\label{eqn:qp4}
\begin{align}
 &F_{n+1}= a_0a_1 G_n\cfrac{1+a_2 H_n( a_0 F_n+1)}{1+a_0 F_n(a_1 G_n+1)},\\
 &G_{n+1}=a_1a_2 H_n\cfrac{1+a_0 F_n(a_1 G_n+1)}{1+a_1 G_n(a_2 H_n+1)},\\
 &H_{n+1}=a_2a_0 F_n\cfrac{1+a_1 G_n(a_2 H_n+1)}{1+a_2 H_n(a_0 F_n+1)},\\
 &F_nG_nH_n=q^{2n+1}c^2,
\end{align}
\end{subequations}
where
\begin{equation}
 F_n={T_4}^n(f_0),\quad
 G_n={T_4}^n(f_1),\quad
 H_n={T_4}^n(f_2).
\end{equation}

It is also known that discrete dynamical systems of Painlev\'e type can be obtained 
from elements of infinite order of  $\widetilde{W}((A_2+A_1)^{(1)})$ which are not necessarily translations in $Q((A_2+A_1)^{(1)})$ \cite{Takenawa:03, KNT:11}.
We introduce the half-translation $R_1=\pi^2s_1$ satisfying 
\begin{equation}
 {R_1}^2=T_1.
\end{equation}
Transformation $R_1$ is not a translational motion on $Q((A_2+A_1)^{(1)})$ and parameter space:
\begin{subequations}
\begin{align}
 &(\alpha_0,\alpha_1,\alpha_2,\beta_0,\beta_1).R_1
 =(\alpha_0+\alpha_2-\delta,\alpha_1+\alpha_2,\delta-\alpha_2,\beta_0,\beta_1),\label{eqn:action_R1_rootA2A1}\\
 &R_1.(a_0,a_1,a_2,c)=(a_2a_0,q^{-1}a_2a_1,q{a_2}^{-1},c),
\end{align}
\end{subequations}
but by letting $a_2=q^{1/2}$ it becomes the translational motion in the parameter subspace:
\begin{equation}
 R_1.(a_0,a_1,c)=(q^{1/2}a_0,q^{-1/2}a_1,c).
\end{equation}
The action of $R_1$:
\begin{equation}
 R_1(f_1)=f_0,\quad
 R_1(f_0)=\cfrac{qc^2}{f_1f_0}\,\cfrac{1+a_0f_0}{a_0+f_0},
\end{equation}
with the condition $a_2=q^{1/2}$, gives the single second-order ordinary difference equation:
\begin{equation}\label{eqn:qp2}
 F_{n+1}F_{n-1}=\cfrac{qc^2}{F_n}\,\cfrac{1+q^{n/2}a_0F_n}{q^{n/2}a_0+F_n},
\end{equation}
where
\begin{equation}
 F_n={R_1}^n(f_0),
\end{equation}
which is known as a $q$-discrete analogue of Painlev\'e II equation \cite{RG:96}.

{\Rem
Let 
\begin{equation}
 \gamma_0=-\alpha_0+\alpha_1,\quad
 \gamma_1=2\alpha_0+\alpha_2,
\end{equation}
where $\delta=\gamma_0+\gamma_1$.
The submodule of $\bigoplus_{i=0}^2\mathbb{Z}\alpha_i$:
\begin{equation}
 \mathbb{Z}\gamma_0\bigoplus \mathbb{Z}\gamma_1
\end{equation}
is the root system of type $A_1^{(1)}$ since its corresponding Cartan matrix is of type $A_1^{(1)}$$:$
\begin{equation}
 (c_{ij})_{i,j=0}^1=\begin{pmatrix}2&-2\\-2&2\end{pmatrix},
\end{equation}
where
\begin{equation}
 c_{ij}=\cfrac{2(\gamma_i|\gamma_j)}{(\gamma_j|\gamma_j)}.
\end{equation}
It is obvious that the root system
\begin{equation}
 Q((A_1+A'_1)^{(1)})
 =\mathbb{Z}\gamma_0\bigoplus\mathbb{Z}\gamma_1\bigoplus\mathbb{Z}\beta_0\bigoplus\mathbb{Z}\beta_1
\end{equation}
is also orthogonal to the root system $Q(A_5^{(1)})$.
The transformation $R_1$ is not a translation in $Q((A_2+A_1)^{(1)})$ $($see Equation \eqref{eqn:action_R1_rootA2A1}$)$ 
but a translation in $Q((A_1+A'_1)^{(1)})$$:$
\begin{equation}
 (\gamma_0,\gamma_1,\beta_0,\beta_1).R_1
 =(\gamma_0,\gamma_1,\beta_0,\beta_1)+(1,-1,0,0)\delta.
\end{equation}
}
\section{Conclusion}\label{Con}
The purpose of this article is to give a detailed exposition about our approach
of associating discrete integrable systems to space-filling polytopes with the Weyl group symmetries.
The main outcome of this approach is that the connection between the
different discrete integrable systems can be clarified via the Weyl groups. These connections are
realised as reductions from higher dimensional system (quad-equations)
to lower dimensional system (discrete \Pa equations).
We associate the reductions with the degeneration of the symmetry
groups, which are realised as geometric constraints that give rise to deformation/degeneration 
of the polytopes of these symmetry groups, thus providing a simple way of obtaining and
understanding the connections between the different discrete integrable systems. 
Moreover, we have shown in Section 3 and 4
how the properties of the affine Weyl groups 
manifest in the two aspects of the discrete \Pa equations, giving complementary combinatorial and
geometrical information about the systems.

Having established the connections between quad-equations
and discrete \Pa equations some immediate consequences follow.
First, Lax pairs for discrete \Pa equations
can be derived using Lax pairs of the quad-equations \cite{jn3, jns5}.
Second, higher dimensional generalisations of discrete \Pa equations can be obtained by generalising the
combinatorics of the associated quad-equations and the reduction conditions. Such generalisation of system
\eqn{qPIII} was obtained in \cite{jns2}.

There are still many questions that remain open. For instance, all the 
discrete \Pa equations in Sakai's classification are of the simply laced types ($A$, $D$, $E$).
However, recently a nonlinear discrete system of $F_4$ type has been found as a reduction of the 
 Q4 equation in the ABS list \cite{ahjn:16}. There have not been many such examples, and their natures 
 are not yet well understood. 
 
 \noindent{\bf Acknowledgement.} 
Y. Shi would like to express her gratitude to R. B. Howlett 
for the enlightening discussions related to the Coxeter groups.

\bibliographystyle{abbrv}
\bibliography{wn_100815}

\begin{thebibliography}{10}

\bibitem{abs:03}
V.~E. Adler, A.~I. Bobenko, and Y.~B. Suris.
\newblock Classification of integrable equations on quad-graphs. {T}he
  consistency approach.
\newblock {\em Comm.Math.Phys.}, 233(3):513--543, 2003.

\bibitem{abs:12}
V.~E. Adler, A.~I. Bobenko, and Y.~B. Suris.
\newblock Classification of integrable discrete equations of octahedron type.
\newblock {\em Int. Math. Res. Not. IMRN}, (8):1822--1889, 2012.

\bibitem{akt:13}
J.~{Atkinson}.
\newblock {Idempotent biquadratics, Yang-Baxter maps and birational
  representations of Coxeter groups}.
\newblock {\em arXiv: 1301.4613}.

\bibitem{akt:15}
J.~Atkinson.
\newblock On the lattice-geometry and birational group of the six-point
  multi-ratio equation.
\newblock {\em Proc. A.}, 471(2173):20140612, 16, 2015.

\bibitem{ahjn:16}
J.~Atkinson, P.~Howes, N.~Joshi, and N.~Nakazono.
\newblock Geometry of an elliptic-difference equation related to {Q4}.
\newblock {\em The Journal of the London Mathematical Society (in press)},
  2016.

\bibitem{akt:14}
J.~{Atkinson} and Y.~{Yamada}.
\newblock {Quadrirational Yang-Baxter maps and the affine-E8 Painleve lattice}.
\newblock {\em arXiv:1405.2745}.

\bibitem{Boll:11c}
R.~Boll.
\newblock Corrigendum: ``{C}lassification of 3{D} consistent quad-equations''.
\newblock {\em J. Nonlinear Math. Phys.}, 19(4):1292001--1--1292001--3, 2012.

\bibitem{BH}
N.~Bourbaki.
\newblock {\em {G}roupes et alg\`ebres de {L}ie, {C}hapitre {IV}--{VI}}.
\newblock Hermann, Paris, 1968.

\bibitem{CSbook}
J.~H. Conway and N.~J.~A. Sloane.
\newblock {\em Sphere packings, lattices and groups}.
\newblock Springer-Verlag, New York, 1999.

\bibitem{Doliwa:11}
A.~Doliwa.
\newblock The affine {W}eyl group symmetry of {D}esargues maps and of the
  non-commutative {H}irota-{M}iwa system.
\newblock {\em Phys. Lett. A}, 375(9):1219--1224, 2011.

\bibitem{hirota:81}
R.~Hirota.
\newblock Discrete analogue of a generalized {T}oda equation.
\newblock {\em J. Phys. Soc. Japan}, 50(11):3785--3791, 1981.

\bibitem{Hbook}
J.~E. Humphreys.
\newblock {\em Reflection groups and {C}oxeter groups}, volume~29 of {\em
  Cambridge Studies in Advanced Mathematics}.
\newblock Cambridge University Press, Cambridge, 1990.

\bibitem{jn3}
N.~{Joshi} and N.~{Nakazono}.
\newblock {Lax pairs of discrete Painlev\'e equations: $(A_2+A_1)^{(1)}$ case}.
\newblock {\em arXiv:1503.04515}.

\bibitem{jns5}
N.~Joshi, N.~Nakazono, and Y.~Shi.
\newblock Lattice equations arising from discrete {P}ainlev\'e systems. {II}.
  {$A_4^{(1)}$} case.
\newblock {\em arXiv:1603.09414}.

\bibitem{jns2}
N.~Joshi, N.~Nakazono, and Y.~Shi.
\newblock Geometric reductions of {ABS} equations on an {$n$}-cube to discrete
  {P}ainlev\'e systems.
\newblock {\em J. Phys. A}, 47(50):505201, 16, 2014.

\bibitem{jns1}
N.~Joshi, N.~Nakazono, and Y.~Shi.
\newblock Lattice equations arising from discrete {P}ainlev\'e systems. {I}.
  {$(A_2 + A_1)^{(1)}$} and {$(A_1 + A_1^\prime)^{(1)}$} cases.
\newblock {\em J. Math. Phys.}, 56(9):092705, 25, 2015.

\bibitem{kac:90}
V.~G. Kac.
\newblock {\em Infinite-dimensional {L}ie algebras}, volume~44.
\newblock Cambridge University Press, 1994.

\bibitem{KMNOY:06}
K.~Kajiwara, T.~Masuda, M.~Noumi, Y.~Ohta, and Y.~Yamada.
\newblock Point configurations, {C}remona transformations and the elliptic
  difference {P}ainlev\'e equation.
\newblock In {\em Th\'eories asymptotiques et \'equations de {P}ainlev\'e},
  volume~14 of {\em S\'emin. Congr.}, pages 169--198. Soc. Math. France, Paris,
  2006.

\bibitem{KNT:11}
K.~Kajiwara, N.~Nakazono, and T.~Tsuda.
\newblock Projective reduction of the discrete {P}ainlev\'e system of type
  {$(A_2+A_1)^{(1)}$}.
\newblock {\em Int. Math. Res. Not. IMRN}, (4):930--966, 2011.

\bibitem{KNY:01}
K.~Kajiwara, M.~Noumi, and Y.~Yamada.
\newblock A study on the fourth {$q$}-{P}ainlev\'e equation.
\newblock {\em J. Phys. A}, 34(41):8563--8581, 2001.

\bibitem{KTGR:00}
M.~D. Kruskal, K.~M. Tamizhmani, B.~Grammaticos, and A.~Ramani.
\newblock Asymmetric discrete {P}ainlev\'e equations.
\newblock {\em Regul. Chaotic Dyn.}, 5(3):273--280, 2000.

\bibitem{miwa:82}
T.~Miwa.
\newblock On {H}irota's difference equations.
\newblock {\em Proc. Japan Acad. Ser. A Math. Sci.}, 58(1):9--12, 1982.

\bibitem{moody:92}
R.~V. Moody and J.~Patera.
\newblock Vorono\u\i\ and {D}elaunay cells of root lattices: classification of
  their faces and facets by {C}oxeter-{D}ynkin diagrams.
\newblock {\em J. Phys. A}, 25(19):5089--5134, 1992.

\bibitem{NY:98}
M.~Noumi and Y.~Yamada.
\newblock Affine {W}eyl groups, discrete dynamical systems and {P}ainlev\'e
  equations.
\newblock {\em Comm. Math. Phys.}, 199(2):281--295, 1998.

\bibitem{ORG:01}
Y.~Ohta, A.~Ramani, and B.~Grammaticos.
\newblock An affine {W}eyl group approach to the eight-parameter discrete
  {P}ainlev\'e equation.
\newblock {\em J. Phys. A}, 34(48):10523--10532, 2001.
\newblock Symmetries and integrability of difference equations (Tokyo, 2000).

\bibitem{PTV:06}
V.~G. Papageorgiou, A.~G. Tongas, and A.~P. Veselov.
\newblock Yang-{B}axter maps and symmetries of integrable equations on
  quad-graphs.
\newblock {\em J. Math. Phys.}, 47(8):083502, 16, 2006.

\bibitem{RG:96}
A.~Ramani and B.~Grammaticos.
\newblock Discrete {P}ainlev\'e equations: coalescences, limits and
  degeneracies.
\newblock {\em Phys. A}, 228(1-4):160--171, 1996.

\bibitem{sak:01}
H.~Sakai.
\newblock Rational surfaces associated with affine root systems and geometry of
  the {P}ainlev\'e equations.
\newblock {\em Comm.Math.Phys.}, 220(1):165--229, 2001.

\bibitem{schief:03}
W.~K. Schief.
\newblock Lattice geometry of the discrete {D}arboux, {KP}, {BKP} and {CKP}
  equations. {M}enelaus' and {C}arnot's theorems.
\newblock {\em J. Nonlinear Math. Phys.}, 10(suppl. 2):194--208, 2003.

\bibitem{Takenawa:03}
T.~Takenawa.
\newblock Weyl group symmetry of type {$D^{(1)}_5$} in the {$q$}-{P}ainlev\'e
  {V} equation.
\newblock {\em Funkcial. Ekvac.}, 46(1):173--186, 2003.

\bibitem{venkov:54}
B.~A. Venkov.
\newblock On a class of {E}uclidean polyhedra.
\newblock {\em Vestnik Leningrad. Univ. Ser. Mat. Fiz. Him.}, 9(2):11--31,
  1954.

\bibitem{ves:07}
A.~Veselov.
\newblock Yang-{B}axter maps: dynamical point of view.
\newblock In {\em Combinatorial aspect of integrable systems}, volume~17 of
  {\em MSJ Mem.}, pages 145--167. Math. Soc. Japan, Tokyo, 2007.

\end{thebibliography}

\end{document}